\newtheorem{theorem}{Theorem}
\newtheorem{proposition}{Proposition}
\newtheorem{lemma}{Lemma}
\newtheorem{corollary}{Corollary}
\newtheorem{example}{Example}
\newtheorem{remark}{Remark}
\DeclareMathOperator{\supp}{supp}
\newcommand{\1}{\mathds{1}}
\renewcommand{\S}{\mathcal{S}}
\newcommand{\A}{\mathcal{A}}
\newcommand{\B}{\mathcal{B}}
\newcommand{\Z}{\mathcal{Z}}
\newcommand{\M}{\mathcal{M}}
\newcommand{\V}{\mathcal{V}}
\newcommand{\E}{\mathcal{E}}
\newcommand{\another}[1]{\tilde{#1}}
\newcommand{\modified}[1]{{#1}'}
\newcommand{\transition}{\delta}
\newcommand{\Last}{\textrm{Last}}
\newcommand{\Reach}{\textrm{Reach}}
\newcommand{\Exit}{\textrm{Exit}}
\newcommand{\reach}{\textrm{reach}}
\newcommand{\exit}{\textrm{exit}}
\newcommand{\stay}{\textrm{stay}}
\newcommand{\policy}{\sigma}
\newcommand{\PP}{\mathbb{P}}
\newcommand{\NN}{\mathbb{N}}
\newcommand{\RR}{\mathbb{R}}
\newcommand{\ZZ}{\mathbb{Z}}
\newcommand{\eps}{\varepsilon}
\newcommand{\defas}{\coloneqq}
\title{Limit-sure reachability for small memory policies in POMDPs is NP-complete}
\date{}
\author{
	Ali Asadi
	\and
	Krishnendu Chatterjee
	\and
	Raimundo Saona
	\and
	Ali Shafiee
}
\begin{document}
	
\maketitle

\begin{abstract}
    A standard model that arises in several applications in sequential decision-making is partially observable Markov decision processes (POMDPs) where a decision-making agent interacts with an uncertain environment.
    A basic objective in such POMDPs is the reachability objective, where given a target set of states, the goal is to eventually arrive at one of them.
    The limit-sure problem asks whether reachability can be ensured with probability arbitrarily close to 1.
    In general, the limit-sure reachability problem for POMDPs is undecidable.
    However, in many practical cases the most relevant question is the existence of policies with a small amount of memory.
    In this work, we study the limit-sure reachability problem for POMDPs with a fixed amount of memory.
    We establish that the computational complexity of the problem is NP-complete.
\end{abstract}

\section{Introduction}

\paragraph{MDPs and POMDPs.}
A standard model in sequential decision-making is Markov decision processes (MDPs)~\cite{bellman1957MarkovianDecisionProcess, howard1960DynamicProgrammingMarkov}, which represents dynamical systems with both nondeterministic and probabilistic behavior.
MDPs provide the framework to model and solve control and probabilistic planning and decision-making problems~\cite{filar1997CompetitiveMarkovDecision, puterman2014MarkovDecisionProcesses} where the nondeterminism represents the choice of the control actions for the controller (or agent) and the probabilistic behavior represents the stochastic response of the system to control actions.
In perfectly observable MDPs the controller observes the evolution of the states of the system precisely, whereas in partially observable MDPs (POMDPs) the state space is partitioned according to observations for the controller, i.e., the controller can only view the observation of the current state (the partition the state belongs to) and not the precise state~\cite{bertsekas1976DynamicProgrammingStochastic, papadimitriou1987ComplexityMarkovDecision}.
POMDPs are widely used in several applications, including computational biology~\cite{durbin1998BiologicalSequenceAnalysis}, speech processing~\cite{mohri1997FinitestateTransducersLanguage}, image processing~\cite{culik1997DigitalImagesFormal}, software verification~\cite{cerny2011QuantitativeSynthesisConcurrent}, robot planning \cite{kress-gazit2009TemporalLogicBasedReactiveMission, kaelbling1998PlanningActingPartially}, reinforcement learning~\cite{kaelbling1996ReinforcementLearningSurvey}.

\paragraph{Reachability objectives and computational problems.}
A basic and fundamental objective in POMDPs is the reachability objective.
Given a set of target states, the reachability objective requires that some target state is visited at least once.
A policy is a recipe that resolves the choice of control actions. 
The main computational problems for POMDPs with reachability objectives are: 
(a) the quantitative problem asks if, for a fixed $\lambda \in (0, 1)$, there exists a policy that ensures the reachability objective with probability at least $\lambda$;
(b) the qualitative problem has two variants: 
(i)~almost-sure winning problem asks if there exists a policy that ensures the reachability objective with probability~$1$; and 
(ii)~limit-sure winning problem asks whether, for every $\lambda < 1$, there exists a policy that ensures the reachability objective with probability at least $\lambda$ (i.e., ensuring the reachability objectives with probability arbitrarily close to~$1$).

\paragraph{Significance of qualitative problems.}
The qualitative problem of limit-sure winning is of great significance in several applications.
For example, in the analysis of randomized embedded schedulers~\cite{baruah1992CompetitivenessOnlineRealtime, chatterjee2013AutomatedAnalysisRealtime}, the important question is whether every thread progresses with probability arbitrarily close to~1.
Moreover, in applications where it might be sufficient that the correct behavior happens with probability at least $\lambda < 1$, the correct choice of the threshold $\lambda$ can still be challenging, due to simplifications and imprecisions introduced during modeling.
In cases where almost-sure winning cannot be ensured limit-sure winning provides the strongest guarantee as compared to quantitative problems.
Besides its importance in practical applications, almost-sure and limit-sure convergence, like convergence in expectation, is a fundamental concept in probability theory, 
and provides the strongest probabilistic guarantees~\cite{durrett2019ProbabilityTheoryExamples}.

\paragraph{Previous results.}
The quantitative analysis problem for POMDPs with reachability objectives is undecidable~\cite{paz1971IntroductionProbabilisticAutomata}, and the undecidability result even holds for any approximation~\cite{madani2003UndecidabilityProbabilisticPlanning}.
In contrast, the complexities of the qualitative analysis problems are as follows:
(a)~the almost-sure winning problem is EXPTIME-complete~\cite{chatterjee2010QualitativeAnalysisPartiallyObservable, baier2012ProbabilisticOautomata}; and 
(b)~the limit-sure winning problem is undecidable \cite{gimbert2010ProbabilisticAutomataFinite, chatterjee2010ProbabilisticAutomataInfinite}.

\paragraph{Small-memory policies.}
While the computational complexities for the general problems are very high (undecidable in several cases), the same computational questions restricted to policies with small or constant amount of memory are important.
This is an interesting theoretical question and is practically relevant as the existence of a small-sized controller is desirable in all applications.
The existence of small-memory policies for almost-sure winning was studied in~\cite{chatterjee2016SymbolicSATBasedAlgorithm}, and proved to be NP-complete.
However, the quantitative problem is ETR-complete~\cite{junges2018FiniteStateControllersPOMDPs, junges2021ComplexityReachabilityParametric}, even for memoryless policies, where ETR is the existential theory of the reals, and it is a major open question if ETR is in NP or not.
The complexity of the limit-sure problem with respect to small-memory policies has not been studied and is the focus of this work.

\paragraph{Our contributions. }
In contrast to perfect-observation MDPs where almost-sure winning coincides with limit-sure winning, we show that in POMDPs almost-sure winning is different from limit-sure winning, see Example~\ref{Example: Almost-Sure is not Limit-Sure}.
Our main contribution to the limit-sure winning problem for POMDPs with reachability objectives with respect to constant memory policies is to establish that the computational complexity is NP-complete.	
Table~\ref{Table: Summary of results} summarizes the complexity results.

\begin{table}[ht]
    \centering
    \begin{tabular}{|l|c|c|}
        \hline
        \multirow{2}{*}{Problem} & \multicolumn{2}{c|}{Policies} \\
        \cline{2-3}
        & General & Constant memory \\
        \hline
        \hline
        Almost-sure & EXPTIME & NP-complete \\
        \hline
        Limit-sure & Undecidable & \makecell{\textbf{NP-complete}\\ \Cref{Result: NP-completeness}, \Cref{Result: constant memory NP-complete}} \\
        \hline
        Quantitative & Undecidable & ETR-complete \\
        \hline
    \end{tabular}
    \caption{%
        Complexity of quantitative, almost-sure, and limit-sure problems for general and constant-memory policies.
        Our contribution is marked in bold.
    }
    \label{Table: Summary of results}
\end{table}

\paragraph{Technical contributions.}
While we establish the same computational complexity of NP-completeness for limit-sure winning as for almost-sure winning, there are significant technical challenges. 
For example, in general, the almost-sure problem is EXPTIME-complete whereas the limit-sure problem is undecidable, which highlights that they are different problems.
Under memoryless policies, if a policy is almost-sure winning, then playing the actions in its support uniformly at random is also almost-sure winning, so it suffices to guess the support of actions. 
In contrast, we show that memoryless policies that are witness of the limit-sure winning property are more refined:  first, they are functional policies; second, there is a notion of ranks over actions where rank $k$ actions are played with probability proportional to $\eps^k$.

\paragraph{Related works.}
The area of POMDPs with applications is a huge and active research area. 
POMDPs with reachability objectives have been considered in the probabilistic automata theory community~\cite{gimbert2010ProbabilisticAutomataFinite,chatterjee2010ProbabilisticAutomataInfinite,chatterjee2010QualitativeAnalysisPartiallyObservable,baier2012ProbabilisticOautomata} as well as in the probabilistic planning community~\cite{kress-gazit2009TemporalLogicBasedReactiveMission,kaelbling1998PlanningActingPartially}.

The contingent or strong planning considers probabilistic choices as an adversary and is different from the qualitative winning problems we consider.
The strong cyclic planning problem is EXPTIME-complete~\cite{kaelbling1998PlanningActingPartially} and is closer to the almost-sure winning problem, but there are subtle differences, see~\cite{chatterjee2016SymbolicSATBasedAlgorithm}.

The almost-sure winning problem is considerably different from limit-sure winning which is in general undecidable, and none of the previous approaches apply to the limit-sure winning problem under small-memory policies.

Similarly, the connection between small-memory policies for POMDPs and parametric Markov chains (pMCs) was established by~\cite{junges2018FiniteStateControllersPOMDPs} for quantitative reach-avoid objectives. 
Later,~\cite{junges2021ComplexityReachabilityParametric} proved that some qualitative reachability problems for pMCs are all in NP.
They qualitative problems they considered include almost-sure reachability but not limit-sure reachability.
Therefore, this line of work doe not apply to the limit-sure winning problem under small-memory policies either.

\section{Preliminaries}

\paragraph{Notation.}
For a positive integer $n$ we denote the set $\{1, 2, \ldots, n\}$ by $[n]$.
For a set $\A$, the set of probability distributions over $\A$ is denoted by $\Delta(\A)$.
The probability distribution that assigns probability one to an element $a \in \A$ is denoted by $\1[a]$.
The disjoint union of sets is denoted by $\sqcup$.
For convenience, we will exchange the roles of $\lambda$ and $1 - \eps$ depending on the context.

\paragraph{POMDPs.}
A Partially Observable Markov Decision Process (POMDP) is a tuple $P = (\S, \A, \transition, \Z, o, s_0)$ where:
\begin{itemize}
    \item 
        $\S$ is a finite set of states;
    \item
        $\A$ is a finite set of actions;
    \item
        $\transition \colon \S \times \A \to \Delta(\S)$ is a probabilistic transition function that, given a state $s$ and an action $a$, returns the probability distribution over the successor states, i.e., the transition probability from $s$ to $s'$ given $a$ is denoted by $\transition(s, a)(s')$;
    \item
        $\Z$ is a finite set of observations;
    \item
        $o \colon \S \to \Z$ is an observation function that maps every state to an observation which, for simplicity, and without loss of generality, we consider that $o$ is a deterministic function~\cite[Remark~1]{chatterjee2015QualitativeAnalysisPOMDPs};
    \item
        $s_o \in \S$ is the unique initial state;
\end{itemize}
If $|\Z| = 1$, then we call the POMDP a blind MDP since the controller receives no information through the observations.
In this case, we identify the blind MDP with the tuple $(\S, \A, \transition, s_0)$.
Similarly, if $|\A| = |\Z|= 1$, then we call the POMDP a Markov chain, which coincides with the classic definition, and identify it simply with the tuple $(\S, \transition, s_0)$.

\paragraph{Plays and cones.}
A play (or a path) in a POMDP is an infinite sequence $(s_0, a_0, s_1, a_1, \ldots)$ of states and actions such that, for all $i \ge 0$, we have $\transition(s_i, a_i)(s_{i + 1}) > 0$. 
For a finite prefix $\omega \in (\S \times \A)^*$ of a play, the cone given by $\omega$ is the set of plays with $\omega$ as the prefix, and the last state of $\omega$, or $s_0$ if $\omega$ is empty, is denoted by $\Last(\omega)$. 
For a finite prefix $\omega = (s_0, a_0, s_1, a_1, \ldots, s_n, a_n)$ the sequence of observations and actions associated with $\omega$ is denoted by $o(\omega) = (o(s_0), a_0, o(s_1), a_1, \ldots, o(s_n), a_n) \in (\Z \times \A)^*$, which we call an observable history.

\paragraph{Policies.}
A policy is a recipe to extend prefixes of plays and is a function $\sigma \colon \Z \times (\A \times \Z)^* \to \Delta(\A)$ that, given a finite observable history, selects a probability distribution over the actions.
The set of all policies is denoted by $\Sigma$.

\paragraph{Policy with memory.}
A policy with memory is a tuple $\sigma = (\sigma_a, \sigma_u, \M, m_0)$ where: 
(i) $\M$ is a finite set of memory states; 
(ii) the function $\sigma_a \colon \M \times \Z \to \Delta(\A)$ is the action selection function that, given the current memory state and observation, gives the probability distribution over actions;
(iii) the function $\sigma_u \colon \M \times \Z \times \A \to \M$ is the memory update function that, given the current memory state, observation, and action, updates the memory state; and 
(iv) the memory state $m_0 \in \M$ is the initial memory state.
The set of all policies with memory amount $m$ is denoted by $\Sigma_m$.

\paragraph{Memoryless policies.}
A policy $\sigma$ is memoryless (or observation-stationary) if it depends only on the current observation, i.e., for every two histories $\omega$ and $\omega'$, if $o(\Last(\omega)) = o(\Last(\omega'))$, then $\sigma(o(\omega)) = \sigma(o(\omega'))$. 
Therefore, a memoryless policy is just a mapping from observations to a distribution over actions $\sigma \colon \Z \to \Delta(\A)$. 
The set of all memoryless policies corresponds to $\Sigma_1$.

\paragraph{Probability measure.}
Given a policy $\sigma$ and a starting state $s_0$, the unique probability measure over Borel sets of infinite plays obtained given $\sigma$ is denoted by $\PP_{s_0}^\sigma( \cdot )$, which is defined by Carathéodory's extension theorem by extending the natural definition over cones of plays~\cite{billingsley2012ProbabilityMeasurea}. 

\paragraph{Reachability objective and value.}
Given a set of target states, the reachability objective requires that a target state is visited at least once.
For simplicity and w.l.o.g., we consider that there is a single target state $\top \in \S$ since we can always add an additional state with transitions from all target states.
Formally, given a target state $\top \in \S$, the reachability objective is $\Reach(\top) = \{ (s_i, a_i)_{i \ge 0} \in (\S \times \A)^\NN \mid \exists i \ge 0 : s_i = \top \}$. 
The reachability value under $\Sigma$ is $\sup_{\sigma \in \Sigma} \PP_{s_0}^\sigma(\Reach(\top))$. 

\paragraph{Almost-sure winning.}
A POMDP $P$ with reachability objective $\Reach(\top)$ is almost-sure winning under $\Sigma$ if there is a fixed policy $\sigma \in \Sigma$ such that 
\[
\PP_{s_0}^\sigma(\Reach(\top)) = 1\,.
\]

\paragraph{Limit-sure winning.}
A POMDP $P$ with reachability objective $\Reach(\top)$ is limit-sure winning under $\Sigma$ if its reachability value under $\Sigma$ is $1$, i.e., if, for all $\eps > 0$, there is a policy $\sigma_\eps \in \Sigma$ such that $\PP_{s_0}^{\sigma_\eps} (\Reach(\top)) \ge 1 - \eps$, or equivalently, if 
\[
\sup_{\sigma \in \Sigma} \PP_{s_0}^\sigma(\Reach(\top)) = 1\,.
\]

\paragraph{Problems under constant memory.}
The limit-sure (resp.\ almost-sure) problem under constant amount of memory $m \ge 1$ asks whether a POMDP $P$ is limit-sure (resp.\ almost-sure) winning under policies restricted to~$\Sigma_m$.

\section{Computational Complexity}

In this section, we present the main complexity result and show that almost-sure winning and limit-sure winning are different properties in POMDPs through the following example.

\begin{example}[Almost-sure $\not =$ Limit-sure]
    \label{Example: Almost-Sure is not Limit-Sure}
    Consider a blind MDP (with no helpful observation) with four states and two actions \emph{wait}, \emph{w}, and \emph{commit}, \emph{c}. 
    The transitions are such that, under action \emph{wait}, the initial state, $s_0$, may loop with positive probability or may transition to a second state, $s_1$, with positive probability.
    Under action \emph{commit}, the initial state moves to an absorbing state, $\bot$, while the second state reaches the target, $\top$. 
    See Figure~\ref{Figure: Limit-sure example} for an illustration.
    The reachability value of this blind MDP is $1$.
    On the one hand, there is no policy that guarantees reachability value one, and therefore the blind MDP is not almost-sure winning.
    On the other hand, for every $\eps > 0$, a policy guaranteeing a reachability probability of at least $1 - \eps$ requires playing action \emph{wait} sufficiently many times before playing action \emph{commit}.
    For every $\eps > 0$, this behavior can be simulated by a distribution over actions that assigns little probability to action \emph{commit}. 
    Therefore, the blind MDP is limit-sure winning, even under memoryless policies. \qed
\end{example}

\begin{figure}[ht]
    \centering
    \begin{tikzpicture} [
        node distance = 1cm, 
        every initial by arrow/.style = {thick}
        ]
        
        \node (s0) [state] {$s_0$};
        \node (s1) [state, right = of s0] {$s_1$};
        \node (s2) [state, left = of s0] {$\bot$};
        \node (top) [state, right = of s1] {$\top$};
        
        \path [-stealth, thick]
        (s0) edge[loop above] node {$w$}   ()
        (s0) edge[bend left, above] node {$w$}   (s1)
        (s1) edge [loop above]  node {$w$} ()
        (s0) edge [above]  node {$c$} (s2)
        (s1) edge [above]  node {$c$} (top)
        (s2) edge [loop above]  node {$w, c$}()
        (top) edge [loop above]  node {$w, c$}();
    \end{tikzpicture}
    \caption{
        Example of POMDP that is limit-sure winning but not almost-sure winning.
        Edges represent a positive probability transition between states when the corresponding action in its label is used.
    }
    \label{Figure: Limit-sure example}
\end{figure}

\paragraph{Main novelty of limit-sure vs almost-sure winning.}
The limit-sure winning property relates to a sequence of policies, as opposed to the almost-sure winning property which relates to a single policy. 
Moreover, given a sequence of policies $(\policy_\eps)_{\eps > 0}$ that prove the limit-sure winning property, if it exists, the limit policy $\lim_{\eps \to 0^+} \policy_\eps$ is not a witness of the limit-sure winning property.
This is the case in \Cref{Example: Almost-Sure is not Limit-Sure} where the limit policy applies action \emph{wait} always and does not indicate that the POMDP is limit-sure winning. 
To preserve the asymptotic information, we work with symbolic or functional policies, called \emph{rank policies}, which assign probabilities to actions based on ranks. 
For intuition, lower ranks have higher priority, and, if low-rank actions form a cycle, then higher-rank actions determine the exit distribution. 
In \Cref{Example: Almost-Sure is not Limit-Sure}, a rank policy giving a low rank to action \emph{wait} and a high rank to \emph{commit} reflects that the POMDP is limit-sure winning.
Note that considering rank policies with only one rank is equivalent to classic policies that choose actions uniformly at random in its support, which is enough to solve the almost-sure problem.

We now state the main complexity result.
\begin{theorem}
    \label{Result: NP-completeness}
    The problem of determining whether a POMDP $P$ with reachability objective is limit-sure winning under memoryless policies is NP-complete.
\end{theorem}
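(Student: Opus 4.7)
The plan is to split the proof into an NP-hardness reduction and an NP membership argument, with the upper bound relying on \emph{rank policies} as the witness object.

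For NP-hardness, I will reduce from 3-SAT in the style of the hardness proof for the almost-sure memoryless problem \cite{chatterjee2016SymbolicSATBasedAlgorithm}. Each variable $x_i$ is assigned an observation where the memoryless choice of one of two actions encodes a truth value, and each clause is encoded by a gadget that routes plays to $\top$ only when one of its literals is satisfied, and otherwise sends all mass to a non-target absorbing sink. The gadgets are designed so that on unsatisfiable instances no memoryless policy drives the mass reaching the losing sink below a strictly positive constant, which is incompatible with limit-sure winning; on satisfiable instances the satisfying assignment gives almost-sure (hence limit-sure) winning. This forces limit-sure to coincide with almost-sure on the POMDPs produced by the reduction, so the known NP-hardness transfers without modification.

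For NP membership, I will guess a rank policy, specified by $\rho \colon \Z \times \A \to \{0, 1, \ldots, |\A|\} \cup \{\infty\}$, where rank $\infty$ means the action is not played and rank $k < \infty$ means the action is played with probability proportional to $\eps^k$, normalized over actions at each observation. The first step is completeness: if the POMDP is limit-sure winning under memoryless policies, some such rank policy witnesses it. To see this, I will take a sequence $\sigma_n \in \Sigma_1$ with $\PP_{s_0}^{\sigma_n}(\Reach(\top)) \to 1$ and extract a subsequence along which, for every observation $z$ and every pair of actions $a, a'$, the ratio $\sigma_n(z)(a')/\sigma_n(z)(a)$ tends to $0$, a positive constant, or $\infty$. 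Grouping actions at each observation by the orders of these ratios yields at most $|\A|$ rank classes per observation, and the resulting rank policy inherits the asymptotic reachability probability along the subsequence.

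The main obstacle is to verify in polynomial time that a guessed rank policy is limit-sure winning. My plan is a stratified recurrence analysis of the induced parametric Markov chain $M_\eps$. First, restrict to rank-$0$ transitions and compute the strongly connected components; mark any bottom SCC containing $\top$ as \emph{winning}, and propagate the mark backwards along almost-sure-reaching paths of the rank-$0$ sub-chain. For every non-winning rank-$0$ BSCC $C$, the exit distribution of $M_\eps$ conditioned on leaving $C$ converges, as $\eps \to 0^+$, to the distribution over the lowest-rank transitions leaving $C$; contract $C$ into a single state carrying that exit distribution and iterate with rank-$1$ transitions, and so on up through rank $|\A|$. The policy is a limit-sure witness iff every state reachable from $s_0$ is ultimately marked winning. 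The technical heart is a perturbation bound showing that the probability of exiting a near-recurrent class of $M_\eps$ via a rank-$k$ edge scales as $\Theta(\eps^k)$ relative to other exits; this justifies the contraction step and guarantees that the loss incurred at each stratum is $o(1)$, so that the stratified decision correctly captures $\lim_{\eps \to 0^+} \PP_{s_0}^{\sigma_\eps}(\Reach(\top))$. Establishing this stratified reduction cleanly, with polynomial-time book-keeping across contractions, is the principal difficulty of the argument.
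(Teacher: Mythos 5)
Your overall architecture (3-SAT hardness plus a guessed rank policy with a polynomial-time verifier) matches the paper, and the hardness half is essentially the paper's own reduction. But both halves of your membership argument have genuine gaps. For completeness of the witness class: extracting a subsequence along which the pairwise ratios $\sigma_n(z)(a')/\sigma_n(z)(a)$ converge in $[0,\infty]$ only records a qualitative ordering of actions, and that is not enough to determine the limiting behaviour of the induced chain. What matters is which \emph{exit graphs} of a near-recurrent class have minimal total order, and these weights are sums of exponents along several edges; whether one exit graph beats another depends on the actual rates (e.g.\ an exit using a probability-$\eps$ edge twice versus one using a single edge of order $\eps^{1.5}$ or $\eps^{3}$), information your rank classes discard. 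For the same reason your restriction of ranks to $\{0,1,\ldots,|\A|\}$ is unjustified: the comparisons are linear inequalities among sums of ranks over exit graphs, and satisfying them can force rank values much larger than $|\A|$ (the paper only gets \emph{polynomial-size} integers, via a homogeneous linear system and Cramer's rule). The paper sidesteps the rate problem entirely by writing limit-sure winning as a first-order sentence over the reals and invoking Tarski--Seidenberg to obtain a \emph{Puiseux-function} policy witness, whose leading exponents are well defined; only then does it convert exponents to integer ranks while preserving the induced graph of communicating classes.

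The verifier has a matching gap. Your contraction rule, ``the exit distribution of a non-winning class converges to the distribution over its lowest-rank outgoing transitions,'' is approximately right for bottom SCCs of the rank-$0$ subchain, but fails at higher strata: once a class is a union of previously contracted classes, its limiting exit distribution is governed by minimal-weight exit graphs in Solan's sense, whose weights add the ranks of the internal repositioning moves between subclasses to the ranks of the leaving edges (a rank-$1$ move into a neighbouring subclass followed by its rank-$1$ exit competes with a direct rank-$2$ exit), and your contraction, which keeps only the conditional exit distribution and forgets the order of magnitude of each class's exit, cannot make these comparisons. This is exactly the machinery the paper imports: the laminar structure of communicating classes, Solan's exit-graph characterization, and a verifier that computes minimal-weight exit graphs via minimum directed spanning trees (plus the $2|\S|-1$ bound on the number of classes) to build the graph of communicating classes, then checks by graph search that the only reachable absorbing class is $\{\top\}$. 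So the plan is repairable, but the ``perturbation bound'' you defer is not a routine estimate; it is the core of the proof and requires tracking exit-graph weights rather than per-edge ranks and conditional distributions.
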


The rest of this section is dedicated to the proof of \Cref{Result: NP-completeness}.
First, we recall some fundamental concepts.
Second, we show the NP upper bound by proving the existence of rank policy witnesses of polynomial size and providing a polynomial-time verifier.
Third, we show the NP-hardness by a reduction from 3-SAT.
Finally, we present extensions of \Cref{Result: NP-completeness} for small-memory policies and objectives other than reachability.

\subsection{Previous Concepts}

We introduce the most important previous concepts used in our proof.

\paragraph{Real-closed fields.}
A real-closed field $R$ is a field, i.e., a set on which addition, subtraction, multiplication, and division work as usual, and moreover the intermediate value theorem applies.
For an introduction, see~\cite{basu2006AlgorithmsRealAlgebraic}.

\paragraph{Puiseux functions.}
The set of Puiseux functions is the set of functions $f \colon [0, \eps_0) \to \RR$ of the form $f(\eps) = \sum_{i \ge k} c_i \eps^{i/q}$ where $k \in \ZZ$, $i$ ranges in $\ZZ$, $c_i \in \RR$, $q \in \NN$, and $\eps_0  > 0$.
The field of Puiseux functions is an important example of a nonarchimedean real-closed field.

\begin{theorem}[{\cite[Section 10]{bewley1976AsymptoticTheoryStochastic}}]
    \label{Result: Puiseux functions}
    The field of Puiseux functions is real-closed.
\end{theorem}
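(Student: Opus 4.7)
The plan is to verify the standard characterization of real-closed fields: an ordered field is real-closed if and only if every positive element has a square root and every odd-degree polynomial has a root. The first step is to equip the Puiseux functions with their natural valuation order, declaring $f(\eps) = \sum_{i \ge k} c_i \eps^{i/q}$ with $c_k \ne 0$ to be positive exactly when $c_k > 0$; this is well-defined and compatible with the field operations because the leading term dominates for small $\eps > 0$, so the algebraic sign of $f$ agrees with the analytic sign of $f(\eps)$ on some right-neighborhood of $0$. Square roots of positive $f$ are then obtained by factoring $f = c_k \eps^{k/q}(1 + g)$ with $g$ of strictly positive valuation and substituting $g$ into the formal binomial series $\sum_{n \ge 0} \binom{1/2}{n} x^n$, noting that convergence in the valuation topology, together with the fact that all resulting exponents lie in $\tfrac{1}{2q}\ZZ$ and are bounded below, ensures that $\sqrt{f}$ is indeed a Puiseux function.

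The crux, and the main obstacle, is the Newton--Puiseux theorem: every polynomial $P(y) = \sum_{i=0}^n a_i(\eps)\, y^i$ of positive degree with Puiseux function coefficients has $n$ roots in the field of Puiseux functions with coefficients in $\mathbb{C}$. The algorithm builds each root term by term from the Newton polygon of $P$, i.e.\ the lower convex hull of the points $(i, v(a_i))$. Each slope $-\gamma$ of this polygon, paired with a complex root $c$ of the polynomial assembled from the vertices on that edge, yields a candidate leading term $c\eps^\gamma$; substituting $y = c\eps^\gamma + y_1$ reduces the problem to a lower-complexity polynomial in $y_1$, and iterating produces a formal series root. The delicate point is to show that the iteration yields a genuine Puiseux function rather than a series with unbounded ramification: one bounds the common denominator of the exponents (classically by $n!$) through a careful tracking of the denominators introduced at each step.

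Finally, I would transfer the conclusion to the real Puiseux functions. Adjoining $i = \sqrt{-1}$ to the real Puiseux field embeds it into the complex Puiseux field, and a dimension count shows this extension has degree exactly $2$ and is algebraically closed by Newton--Puiseux; the Artin--Schreier theorem then implies the real Puiseux field is real-closed. Equivalently and more concretely, any odd-degree polynomial with real Puiseux coefficients has its roots in the complex Puiseux field permuted in conjugate pairs under coefficient-wise complex conjugation, so an odd number of roots forces at least one root to be fixed by conjugation and thus to lie in the real Puiseux field itself. Combined with the order and square root steps, this completes the verification.
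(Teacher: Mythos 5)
The paper itself does not prove this statement: it imports it from Bewley and Kohlberg (Section~10 of the cited reference), and the argument there is the classical one you outline — order the field by the sign of the leading coefficient, extract square roots of positive elements via the binomial series, and obtain roots of odd-degree polynomials by passing to the complex Puiseux field (Newton--Puiseux) and pairing roots under coefficientwise conjugation, equivalently Artin--Schreier. So your route is essentially the standard one, and the algebraic skeleton is sound.

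The one genuine gap concerns convergence. The Puiseux functions in the statement are actual functions on $[0, \eps_0)$, i.e.\ convergent Puiseux series, not formal ones. The Newton--Puiseux iteration as you describe it produces a formal series root, and bounding the ramification (your $n!$ bound on the denominators of the exponents) only shows that the root lies in a formal field such as $\RR((\eps^{1/N}))$ for some integer $N$; it does not show that the resulting series converges on some interval $[0, \eps_0')$, which is exactly what membership in the field of Puiseux functions requires. Closing this needs an additional argument — for instance a majorant-series estimate along the Newton--Puiseux iteration, an implicit-function-theorem step once the iteration has isolated a simple root, or an appeal to the fact that the algebraic closure of the field of convergent Laurent series is the field of convergent Puiseux series. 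The same issue touches your square-root step: convergence in the valuation topology only produces a formal series, and one must add that composing the binomial series with a convergent $g$ of positive valuation yields a series that converges analytically for small $\eps > 0$ (true, but for an analytic reason, not a valuation-theoretic one). With these convergence statements supplied, your proof is complete and matches the cited source's approach.
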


\paragraph{First order theory of the reals.}
A sentence in the first order theory of the reals $\phi$ is given by $Q_1 x_1 Q_2 x_2 \ldots Q_k x_k$ $F(x_1, x_2 \ldots, x_k)$, where $Q_i \in \{ \exists, \forall\}$ are quantifiers and $F(x_1, x_2 \ldots, x_k)$ is a quantifier-free formula in the language of ordered fields with coefficients in a real-closed field.
The decision problem for the first-order theory of the reals is, given a sentence $\phi$, to decide whether it is true or false.
A fundamental result in logic is the following.

\begin{theorem}[{Tarski-Seidenberg principle~\cite[Theorem 2.80, page 70]{basu2006AlgorithmsRealAlgebraic}}]
    \label{Result: Tarki's principle}
    Suppose that $R$ is a real-closed field that contains the real-closed field $\RR$. 
    If $\phi$ is a sentence in the language of ordered fields with coefficients in $\RR$, then it is true in $\RR$ if and only if it is true in $R$.
\end{theorem}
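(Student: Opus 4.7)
The plan is to derive the transfer principle from \emph{quantifier elimination} for the theory of real-closed ordered fields, which carries the substantive content. Granting that every formula $\phi(x_1, \ldots, x_n)$ in the language of ordered fields with coefficients from an ordered subring $A$ is equivalent, in every real-closed field containing $A$, to a quantifier-free formula $\psi(x_1, \ldots, x_n)$ with coefficients in $A$, the theorem follows immediately. Applied to the given sentence $\phi$ with coefficients in $\RR$, quantifier elimination produces a quantifier-free sentence $\psi$ with coefficients in $\RR$ that is equivalent to $\phi$ both in $\RR$ and in $R$. But a quantifier-free sentence with coefficients in $\RR$ is just a Boolean combination of atoms of the form ``$r = 0$'' or ``$r > 0$'' for specific elements $r \in \RR$; since the inclusion $\RR \hookrightarrow R$ preserves the field operations and the order, the truth value of each such atom agrees in $\RR$ and in $R$, hence so does that of $\psi$, and therefore so does that of $\phi$.

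It remains to prove quantifier elimination for real-closed fields. I would proceed by induction on formula complexity, using the standard propositional reductions (negation normal form, distributivity, treating universal quantifiers as negated existentials) to reduce the task to eliminating a single existential quantifier from a conjunction of sign conditions
\[
\exists x \; \bigwedge_{i} \bigl( p_i(x, \bar y) \; \sigma_i \; 0 \bigr),
\]
where each $p_i$ is a polynomial in $x$ whose coefficients are polynomials in $\bar y$ over $A$, and each $\sigma_i \in \{=, \neq, >, <\}$. Qualitatively, the real roots of the $p_i$ partition the line into finitely many intervals on which the sign vector $(\operatorname{sgn} p_i(x, \bar y))_i$ is constant, and the existential sentence holds iff some such interval or isolated root realizes the prescribed sign pattern.

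The main obstacle is converting this qualitative observation into an explicit quantifier-free formula in $\bar y$ with coefficients in $A$. This is handled by the classical real-root counting machinery: \emph{Sturm's theorem} expresses the number of real roots of a univariate polynomial in a given interval via sign changes in a Sturm chain built from the coefficients, and the \emph{Tarski sign-determination} procedure extends this to simultaneously track sign vectors of a finite polynomial family, using subresultant sequences. The output is, by construction, a Boolean combination of polynomial sign conditions on the coefficients, with all arithmetic carried out in the base ring $A$; hence the resulting quantifier-free formula lives over $A$, and the induction goes through. Crucially, every object in the construction is defined by universal polynomial identities over $A$, so the equivalence between the original existential formula and its eliminated form holds uniformly in every real-closed extension of $A$. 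That uniformity is exactly the transfer property the theorem asserts, with the specific case $A = \RR$ yielding the statement.
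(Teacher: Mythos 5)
This statement is quoted from the literature (Theorem~2.80 of Basu--Pollack--Roy) and the paper offers no proof of it, so there is no in-paper argument to compare against; your sketch is, in substance, the standard proof and essentially the one given in the cited reference. The reduction of the transfer principle to quantifier elimination is carried out correctly: once a sentence with parameters in $\RR$ is replaced by an equivalent quantifier-free sentence over $\RR$, its truth value is a Boolean combination of sign conditions on specific real numbers, and these are preserved by any ordered-field embedding $\RR \hookrightarrow R$. You also correctly identify the one point where care is needed, namely that the eliminated formula must be the \emph{same} formula in every real-closed extension of the coefficient ring; this uniformity is exactly what the Sturm/subresultant construction provides, since every step is a universal polynomial identity over the base ring (with case distinctions on vanishing of leading coefficients, which are themselves sign conditions over that ring). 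The only caveat is that the substantive content --- quantifier elimination itself, via Sturm chains and Tarski sign determination --- is described but not proved; as a self-contained proof the proposal is therefore a sketch rather than a complete argument, but the outline is sound and all the right ingredients are named.
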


The following result is a characterization of the reachability value in Markov chains.

\begin{theorem}[{\cite[Theorem 10.15, page 762]{baier2008PrinciplesModelChecking}}]
    \label{Result: value characterization}
    Consider a Markov chain with a set of states $\S$ and a target set $\{ \top \}$.
    Then, the reachability value as a function of the initial state is a solution $v^* \in [0, 1]^{\S}$ of the system of equations given by $v(\top) = 1$ and, for all $s \in \S \setminus \{\top\}$,
    \[
    v(s) = \sum_{\another{s} \in \S} \delta(s, \another{s}) v(\another{s}) \,,
    \]
    such that, for all other solutions $u^*$, we have that $v^*(s) \le u^*(s)$, for all $s \in \S$.
\end{theorem}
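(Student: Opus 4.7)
The plan is to prove the two claims in the theorem---that $v^*(s) \defas \PP_s(\Reach(\top))$ is a solution of the system, and that it is pointwise minimal among all solutions in $[0,1]^{\S}$---by elementary probabilistic arguments, without invoking the heavier machinery on real-closed fields from \Cref{Result: Puiseux functions} or \Cref{Result: Tarki's principle}.

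For the existence part, the boundary equation $v^*(\top) = 1$ is immediate because the reachability event is already satisfied when the chain starts at $\top$. For $s \ne \top$, decomposing on the first transition and using the time-homogeneous Markov property gives
\[
v^*(s) = \sum_{\another{s} \in \S} \transition(s, \another{s}) \, \PP_{\another{s}}(\Reach(\top)) = \sum_{\another{s} \in \S} \transition(s, \another{s}) \, v^*(\another{s})\,,
\]
so $v^*$ satisfies the system.

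For minimality, I would introduce the bounded-horizon approximations $v_n(s) \defas \PP_s(\exists\, 0 \le i \le n : s_i = \top)$. The underlying events are increasing in $n$ with union $\Reach(\top)$, so continuity of probability from below yields $v_n \uparrow v^*$ pointwise. Now fix any solution $u \in [0,1]^{\S}$ of the system; I would show $v_n \le u$ pointwise by induction on $n$. The base case $n = 0$ follows from $v_0(s) = \1[s = \top] \le u(s)$ using $u(\top) = 1$ and $u \ge 0$. For the inductive step, at $s \ne \top$ the equation for $u$ together with monotonicity of the right-hand side gives
\[
u(s) = \sum_{\another{s} \in \S} \transition(s, \another{s}) \, u(\another{s}) \ge \sum_{\another{s} \in \S} \transition(s, \another{s}) \, v_n(\another{s}) = v_{n+1}(s)\,,
\]
while at $s = \top$ both sides equal $1$. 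Sending $n \to \infty$ yields $v^* \le u$.

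The only point needing mild care is the boundary at $\top$: the system replaces the usual one-step equation at $\top$ by the explicit constraint $v(\top) = 1$, which encodes that hitting $\top$ terminates the reachability objective. Defining $v_{n+1}$ through ``reach $\top$ within the first $n+1$ steps'' rather than through an artificially modified absorbing transition function is what cleanly aligns this boundary condition with the inductive step. This is the main subtlety to handle carefully, but once the approximations are set up this way the argument is routine.
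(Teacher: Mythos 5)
Your proposal is correct: first-step analysis gives that the reachability value solves the system, and the finite-horizon approximations $v_n \uparrow v^*$ together with induction and the assumption $u \in [0,1]^{\S}$ (needed for the base case $v_0 \le u$) give minimality. Note that the paper offers no proof of this statement at all --- it is quoted directly from Baier and Katoen --- and your argument is essentially the standard least-fixed-point proof given in that reference, so there is nothing to reconcile with the paper's text.
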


Since we consider Markov chains whose transition probabilities are Puiseux functions, which we call Puiseux Markov chains, we introduce a few concepts used in~\cite{solan2003PerturbationsMarkovChains}.

\paragraph{Puiseux Markov chains.}
A Puiseux Markov chain is a family of Markov chains parameterized by $\eps$ where the transition function is a Puiseux function $\eps \mapsto \delta^\eps \colon \S \to \Delta(\S)$.
In particular, for each $\eps$, the transition $\delta^\eps$ and the starting state $s$ induces a probability measure $\PP^\eps_s$.

\paragraph{Reach and exit times, and exit event.}
For a Markov chain, a state $s \in \S$ and a set of states $\B \subseteq \S$, we consider the following random variables:
\begin{align*}
	\exit(\B) 
	&\defas \min \{ n \ge 0 : s_n \not \in \B \} \,,\\
	\reach(s) 
	&\defas \min \{ n \ge 0 : s_n = s \} \,, \text{ and}\\
	\Exit(\B, s) 
	&\defas \{ \exit(\B) < \infty \} \cap \{ s_{\exit(\B)} = s \} \,,
\end{align*}
i.e., $\exit(\B)$ is the first time a state outside of $\B$ is visited, $\reach(s)$ is the first time the state $s$ is visited, and $\Exit(\B, s)$ is the event of exiting the set $\B$ by visiting state $s$. 
In particular, the event $\Reach(\another{s})$ is equivalent to $\reach(\another{s}) < \infty$.

The following definition generalizes the concept of communicating class in Markov chains for Puiseux Markov chains.

\paragraph{Communicating class in Puiseux Markov chains.}
Given a Puiseux Markov chain, a set of states $\B \subseteq \S$ is a communicating class if, for all $s, \another{s} \in \B$, we have
\begin{align*}
	\lim_{\eps \to 0^+} \PP_{s}^\eps\bigl( \exit(\B) < \reach(\another{s}) \bigr) &= 0\,, \\
	\lim_{\eps \to 0^+} \PP_{s}^\eps\bigl( \Reach(\another{s}) \bigr) &= 1\,,
\end{align*}
i.e., starting from $s$, state $\another{s}$ is visited before exiting $\B$.
Note that the second condition corresponds to the case of $\exit(\B) = \reach(\another{s}) = \infty$ in Solan~\cite{solan2003PerturbationsMarkovChains}, which is implicitly included in this previous work and prevents that a communicating class consists of unconnected states.

The following concept is key to characterizing events in Puiseux Markov chains.

\paragraph{Exit graph.}
Given a Puiseux Markov chain and a set of states $\B \subseteq \S$, an exit graph of $\B$, denoted $g$, is a directed acyclic graph with edges $E(g) \subseteq \B \times \S$ such that, for all $s \in \B$, there exists $\another{s} \in \S$ such that $(s, \another{s}) \in E(g)$.
We denote the set of all exit graphs of $\B$ by $G_{\B}$, and all those in which $s$ can reach $\another{s}$ by $G_{\B}(s \to \another{s})$.
The probability of an exit graph $g$ is the product of the probability of each of its transitions defined as $\delta(g) \defas \prod_{(s, \another{s}) \in g} \delta(s)(\another{s})$.
The \emph{weight} of an exit graph $g$ is the leading power in the Puiseux series expansion of the product of the involved transitions defined as
\[
w(g) \defas \inf \left \{ r \ge 0 : \lim_{\eps \to 0^+} \frac{ \delta^\eps (g) }{\eps^r} \not = 0 \right \} \,. 
\]
The set of exit graphs of $\B$ that have minimal weight is denoted by $G^{\min}_{\B}$ and $G^{\min}_{\B}(s \to \another{s})$ for those in which $s$ can reach $\another{s}$.

The following result shows that the exit distribution of a communicating class is independent of the initial state within the communicating class. 

\begin{theorem}[{\cite[Lemma 3, page 270]{solan2003PerturbationsMarkovChains}}]
    \label{Result: exit distribution}
    Consider a  Puiseux Markov chain and a communicating class $\B \subseteq \S$.
    Then, the following expression is independent of the initial state $s \in \B$
    \[
    \delta(\B, \another{s}) \defas \lim_{\eps \to 0^+} \PP^\eps_{s}(\Exit(\B, \another{s})) \,.
    \]
\end{theorem}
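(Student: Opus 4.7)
The plan is to reduce the independence of the limit to a single pair of states $s_1, s_2 \in \B$ and then argue by symmetry. For fixed $s_1, s_2 \in \B$, I would show that $\lim_{\eps \to 0^+} \PP^\eps_{s_1}(\Exit(\B, \another{s})) = \lim_{\eps \to 0^+} \PP^\eps_{s_2}(\Exit(\B, \another{s}))$ by conditioning on the first visit of the chain starting at $s_1$ to the state $s_2$, and using the communicating-class hypothesis to show this event has probability tending to $1$. Swapping the roles of $s_1$ and $s_2$ then gives the claim.

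First I would establish that for any fixed $s \in \B$ and $\another{s} \in \S$, the function $\eps \mapsto \PP^\eps_{s}(\Exit(\B, \another{s}))$ is itself a Puiseux function. This is because the exit probabilities are the unique bounded solution to a finite linear system (first-step analysis, making every state outside $\B$ absorbing), whose coefficients are Puiseux functions of $\eps$. Since the Puiseux functions form a field that is real-closed (\Cref{Result: Puiseux functions}), the solution of this linear system is again a Puiseux function. A Puiseux function taking values in the bounded interval $[0,1]$ near $0$ cannot have any negative powers in its expansion, hence the limit as $\eps \to 0^+$ exists and equals its constant term.

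Next, I would apply the strong Markov property at the first visit to $s_2$. Let $A^\eps \defas \{ \reach(s_2) < \exit(\B)\}$. Then
\[
\PP^\eps_{s_1}\bigl( \Exit(\B, \another{s}) \bigr) = \PP^\eps_{s_1}(A^\eps) \cdot \PP^\eps_{s_2}\bigl( \Exit(\B, \another{s}) \bigr) + \PP^\eps_{s_1}\bigl( \Exit(\B, \another{s}) \cap (A^\eps)^c \bigr)\,,
\]
because on the event $A^\eps$ the chain, after its first visit to $s_2$, is distributed as a fresh copy started at $s_2$, which still has to exit $\B$. Now the first communicating-class condition, applied with $\another{s} = s_2$, gives $\PP^\eps_{s_1}(\exit(\B) < \reach(s_2)) \to 0$; combined with the second condition $\PP^\eps_{s_1}(\Reach(s_2)) \to 1$ this yields $\PP^\eps_{s_1}(A^\eps) \to 1$, and therefore also $\PP^\eps_{s_1}(\Exit(\B, \another{s}) \cap (A^\eps)^c) \to 0$. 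Passing to the limit gives the equality of the two limits, and then varying $s_2$ over all of $\B$ proves the theorem.

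The main obstacle I anticipate is the clean use of the strong Markov property at a random time that may be infinite with vanishing probability: one has to justify that the correction term is genuinely $o(1)$ uniformly, and that the strong Markov factorization is valid for each fixed $\eps > 0$ before taking the limit. A smaller subtlety is the existence of the limits, which I handle by the Puiseux-function argument; alternatively one could avoid the linear-system step entirely by first proving the equality of the two exit probabilities as Puiseux series up to leading order, but the linear-system route seems cleaner and more self-contained.
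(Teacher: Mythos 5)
The paper does not actually prove this statement---it is imported verbatim from Solan (Lemma~3 there)---so your proposal is compared against the standard argument rather than an in-paper proof. Your argument is essentially correct and follows the same two ideas that make the citation work: (i) existence of the limit because $\eps \mapsto \PP^\eps_{s}(\Exit(\B, \another{s}))$ is a bounded Puiseux function, and (ii) independence of the starting point by splitting at the first visit to a second state $s_2$ and using the communicating-class hypothesis, so that $\PP^\eps_{s_1}(\Exit(\B, \another{s})) = \PP^\eps_{s_1}(A^\eps)\,\PP^\eps_{s_2}(\Exit(\B, \another{s})) + o(1)$ with $\PP^\eps_{s_1}(A^\eps) \to 1$; note that $\Exit(\B,\another{s}) \cap (A^\eps)^c \subseteq \{\exit(\B) < \reach(s_2)\}$ because $\exit(\B) = \reach(s_2)$ is impossible for $s_2 \in \B$, which is the cleanest way to see the correction term vanishes.

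One step is stated too strongly: the exit probabilities are in general \emph{not} the unique bounded solution of the first-step linear system; uniqueness fails exactly when, for fixed $\eps$, the chain can remain in $\B$ forever with positive probability (e.g.\ when $\B$ contains a closed recurrent subclass of the $\eps$-chain, a situation that does occur for the absorbing communicating classes that are central later in the paper). They are only the \emph{minimal} nonnegative solution, so the Cramer's-rule argument does not apply verbatim. This is fixable: for all sufficiently small $\eps$ the support of the transition function is constant, hence so is the set of states of $\B$ lying in closed recurrent subclasses inside $\B$; the exit probability is identically $0$ there, and after deleting those states the reduced system does have a unique solution, which by Cramer's rule is a rational expression in Puiseux functions and hence Puiseux (then boundedness excludes negative powers, as you say). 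Alternatively, one can bypass the linear system altogether by observing that the exit probability is a bounded semialgebraic function of $\eps$ (it is first-order definable with the minimality condition, using \Cref{Result: Tarki's principle}) and bounded semialgebraic functions admit Puiseux expansions at $0$, which is \Cref{Result: Puiseux functions} in the form the paper uses it. With that repair your proof is complete and self-contained.
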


Finally, the following result characterizes the exit ``distribution'' in terms of exit graphs.

\begin{theorem}[{\cite[Equation 6, page 268]{solan2003PerturbationsMarkovChains}}]
    \label{Result: characterization exit distribution}
    Consider a  Puiseux Markov chain and a communicating class $\B \subseteq \S$.
    Then, for all $s \in \B$ and $\another{s} \in \S \setminus \B$, 
    \[
    \delta(\B, \another{s}) 
    = \lim_{\eps \to 0^+} \frac{ \sum_{g \in G^{\min}_{\B}(s \to \another{s})} \delta^\eps(g) }{ \sum_{g \in G^{\min}_{\B}} \delta^\eps(g) } \,,
    \]
    where the sum over an empty set is $0$ and the quotient $0/0$ is also $0$.
\end{theorem}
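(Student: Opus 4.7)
The plan is to prove the identity in two stages: first establish a closed-form pre-limit expression for the exit distribution in terms of exit graphs at each fixed $\eps > 0$, and then extract the asymptotic leading behavior as $\eps \to 0^+$.

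For the pre-limit step, I would fix $\eps > 0$ and work in the auxiliary Markov chain obtained from $\delta^\eps$ by making every state in $\S \setminus \B$ absorbing; the event $\Exit(\B, \another{s})$ in the original chain then coincides with absorption at $\another{s}$ in the auxiliary chain. First-step analysis yields a linear system for the hitting probabilities $p^\eps_s(\another{s}) \defas \PP^\eps_s(\Exit(\B, \another{s}))$ whose matrix has the form $I - Q^\eps$ with $Q^\eps$ the substochastic restriction to $\B$; since $\B$ is a communicating class, $I - Q^\eps$ is invertible for all small $\eps > 0$. Applying Cramer's rule and reinterpreting the resulting determinantal expressions combinatorially via the Markov chain tree theorem would give
\[
p^\eps_s(\another{s}) = \frac{\sum_{g \in G_{\B}(s \to \another{s})} \delta^\eps(g)}{\sum_{g \in G_{\B}} \delta^\eps(g)} \,,
\]
since an exit graph is precisely a spanning directed acyclic subgraph on $\B$ in which each vertex of $\B$ has out-degree one (equivalently, a spanning out-forest rooted in $\S \setminus \B$).

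For the asymptotic step, I would use that each $\delta^\eps(g)$ is a Puiseux function of $\eps$ whose leading exponent is, by definition, the weight $w(g)$. Let $w^* \defas \min_{g \in G_\B} w(g)$. Dividing numerator and denominator by $\eps^{w^*}$, contributions from graphs with $w(g) > w^*$ vanish in the limit while contributions from $G^{\min}_\B$ (and $G^{\min}_\B(s \to \another{s})$) survive. Because each $\delta^\eps(g)$ is a nonnegative Puiseux function, the leading coefficients add rather than cancel, so $\sum_{g \in G^{\min}_\B} \delta^\eps(g)/\eps^{w^*}$ has a strictly positive limit whenever $G^{\min}_\B$ is nonempty; this yields the stated identity, and the boundary conventions dispatch the degenerate case $G^{\min}_\B(s \to \another{s}) = \emptyset$. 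Independence of the limit from $s \in \B$, which is what makes $\delta(\B, \another{s})$ well-defined, is supplied a priori by \Cref{Result: exit distribution}.

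The main technical obstacle I anticipate is establishing the pre-limit identity rigorously, in particular verifying that the combinatorial objects delivered by the matrix-tree argument coincide exactly with $G_{\B}$ as defined here. The acyclicity constraint is essential: naive sums over arbitrary out-neighbor assignments for $\B$ would over-count configurations containing cycles entirely within $\B$, inflating the denominator. Ruling this out carefully — either through an inclusion–exclusion over internal cycles, or by a direct spanning-forest expansion of $\det(I - Q^\eps)$ exploiting the bipartite split between $\B$ and $\S \setminus \B$ — is where most of the combinatorial work concentrates. A secondary subtlety is uniformity: because $\eps \mapsto \delta^\eps(g)$ is a Puiseux series with a common denominator $q$, one can pass to the field of Puiseux functions, invoke \Cref{Result: Puiseux functions} to perform the division algebraically, and only at the end evaluate the leading term as $\eps \to 0^+$.
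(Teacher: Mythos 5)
The paper itself gives no proof of this statement: it is imported verbatim from Solan (2003, Equation 6), so there is no internal argument to compare against. Your proposal is essentially the standard Freidlin--Wentzell derivation that underlies the cited result, and it is sound. The pre-limit identity you state is exactly the (all-minors) matrix-tree formula for absorption probabilities of the chain in which $\S \setminus \B$ is made absorbing, and the passage to the limit by extracting the minimal leading Puiseux exponent $w^*$, with nonnegativity of each $\delta^\eps(g)$ preventing cancellation of leading coefficients, yields precisely the stated ratio over $G^{\min}_{\B}(s \to \another{s})$ and $G^{\min}_{\B}$; independence of $s$ is indeed supplied by \Cref{Result: exit distribution}.

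Two points deserve care. First, the exact pre-limit formula holds only for exit graphs in which every state of $\B$ has out-degree \emph{exactly} one and there is no cycle; you correctly work with this definition, but note that it is stronger than the paper's phrasing, which only requires at least one outgoing edge per state --- with the looser definition the denominator over-counts and the identity fails, so the combinatorial verification you flag as the main obstacle is genuinely where the work lies (it is the classical Freidlin--Wentzell lemma, provable by the spanning-forest expansion of $\det(I - Q^\eps)$ you mention). Second, invertibility of $I - Q^\eps$ does not follow from $\B$ being a communicating class alone: if every transition out of $\B$ is identically zero (an absorbing class), $Q^\eps$ is stochastic, the system is singular, and $G_{\B} = \emptyset$; in that case both sides are $0$ by the stated convention, a degenerate case you should handle separately, since your convention remark only addresses an empty numerator. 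When some exit transition is not identically zero, communication within $\B$ for small $\eps$ makes exit almost sure, $I - Q^\eps$ is invertible, and your argument goes through as written.
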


We call $\delta( \B, \cdot )$ an exit distribution even when it can be constant to zero.
Its support corresponds to all the states mapped to a strictly positive value.
The following concept allows us to characterize limit-sure reachability in Puiseux Markov chains.

\paragraph{Absorbing communicating class.}
Given a Puiseux Markov chain, a communicating class $\B \subseteq \S$ is absorbing if its exit distribution has empty support, i.e., $\supp( \delta(\B, \cdot) ) = \emptyset$.

The following concept is classic in Graph theory and we introduce it for completeness.

\paragraph{Bottom strongly connected component of a directed graph.}
In a directed graph, a bottom strongly connected component is a set of states where: there is a directed path from every state to every other state in the set, and all edges starting in the set lead to states within the set.

\subsection{Upper Bound}
\label{Section: Upper bound}

The NP upper bound complexity is our main result and is established through the following sequence of results.
\begin{enumerate}
    \item 
    We establish a reduction from general POMDPs to blind MDPs (\Cref{Result: Reduction to blind MDPs}).
    \item 
    For blind MDPs, we establish the existence of Puiseux function policy witnesses (\Cref{Result: function witness}).
    \item 
    We establish the laminar structure of a graph of communicating classes in Puiseux Markov chains (\Cref{Result: Laminar communicating classes}).
    \item 
    We establish that the graph of communicating classes characterizes reachable states in Puiseux Markov chains (\Cref{Result: Characterization of reachable states}).
    \item 
    We establish properties of the graph of communicating classes that characterize limit-sure reachability in Puiseux Markov chains (\Cref{Result: Characterization of limit-sure reachability}).
    \item 
    We establish the existence of rank policy witnesses, a simple and polynomial-size policy (\Cref{Result: rank witness existence}).
    \item 
    We provide a polynomial-time verifier for rank policy witnesses (\Cref{Result: Verifier existence}).
\end{enumerate}

The following result establishes a reduction from general POMDPs to blind MDPs.

\begin{lemma}
    \label{Result: Reduction to blind MDPs}
    For every POMDP $P = (\S, \A, \delta, \Z, o, s_0)$, there exists a blind MDP with $\modified{P} = (\S, \A \times \Z, \modified{\delta}, s_0)$ with the same reachability value under memoryless policies.
\end{lemma}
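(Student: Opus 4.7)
The plan is to define $\modified{P}$ so that each blind action bundles a POMDP action together with a guess of the current observation: if the guess is correct, the POMDP transition fires; if the guess is wrong, the state self-loops. Concretely, for $s \in \S$ and $(a, z) \in \A \times \Z$, set $\modified{\delta}(s, (a, z)) \defas \delta(s, a)$ when $z = o(s)$ and $\modified{\delta}(s, (a, z)) \defas \1[s]$ otherwise.

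Equality of reachability values then follows by translating memoryless policies in both directions. Given $\sigma \colon \Z \to \Delta(\A)$ in $P$, define $\modified{\sigma}(a, z) \defas \sigma(z)(a)/|\Z|$. Conversely, given $\modified{\sigma} \in \Delta(\A \times \Z)$ in $\modified{P}$, let $p_z \defas \sum_a \modified{\sigma}(a, z)$ and, when $p_z > 0$, define $\sigma(z)(a) \defas \modified{\sigma}(a, z)/p_z$. The first translation always satisfies $p_z = 1/|\Z| > 0$.

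The core of the argument is the observation that, whenever every $p_z > 0$, the Markov chains induced by $\modified{\sigma}$ on $\modified{P}$ and by $\sigma$ on $P$ have identical hitting probabilities of $\top$. Indeed, at any state $s$ with $o(s) = z_0$, the blind chain almost surely leaves $s$, and conditional on leaving, the exit distribution equals $\sum_a \sigma(z_0)(a)\,\delta(s, a)(\cdot)$, matching the POMDP one-step transition. Since transient self-loops of probability strictly below $1$ do not affect hitting probabilities in a finite Markov chain, the two chains reach $\top$ with identical probabilities, immediately yielding $\sup_\sigma \PP_{s_0}^\sigma(\Reach(\top)) \le \sup_{\modified{\sigma}} \PP_{s_0}^{\modified{\sigma}}(\Reach(\top))$.

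The chief obstacle is the reverse inequality when $p_z = 0$ for some observation $z$: the blind MDP is then trapped at every reachable state with that observation, and the POMDP translation above is undefined. I would handle this by letting $T \defas \{ s \in \S : p_{o(s)} = 0 \}$, choosing $\sigma$ arbitrarily on the offending observations, and noting that the two chains agree exactly until hitting $T$. After hitting $T$, the blind chain is absorbed, whereas the POMDP chain may still reach $\top$; hence $\PP_{s_0}^{\modified{\sigma}}(\Reach(\top)) \le \PP_{s_0}^{\sigma}(\Reach(\top))$, which, combined with the previous inequality, completes the proof.
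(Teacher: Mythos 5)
Your construction and policy translations are exactly the paper's (observation-guessing actions $(a,z)$ with self-loops on wrong guesses, uniform mixing over $\Z$ in one direction, conditioning on the second coordinate in the other), and your lazy-chain argument for equal hitting probabilities plays the same role as the paper's time-change coupling, so this is essentially the same proof. In fact you are slightly more careful than the paper on the reverse direction: the paper normalizes $\modified{\sigma}(\cdot, z)$ without discussing observations carrying zero total mass, whereas your absorbing-set $T$ argument explicitly covers that degenerate case.
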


\begin{proof}[Proof sketch]
    The action $(a, z)$ in the blind MDP corresponds to applying action $a$ only to states whose observation is $z$. 
    We define the transition $\modified{\delta}$ accordingly by introducing loops when an action $(a, z)$ is applied and the underlying state $s$ has a different observation, i.e., $z \not = o(s)$.
    A coupling on the underlying dynamics, that eliminates the introduced loops in the blind MDP, shows that the reachability values under memoryless policies coincide.
\end{proof}

\begin{proof}
    Consider an arbitrary POMDP $P = (\S, \A, \delta, \Z, o, s_0)$.
    Define a blind MDP $\modified{P} = (\S, \modified{\A}, \modified{\delta}, \modified{\Z}, \modified{o}, s_0)$ where
    \begin{itemize}
        \item 
        $\modified{\A} \defas \A \times \Z$;
        \item 
        $\modified{\delta} \colon \S \times \modified{\A} \to \Delta(\S)$ is given by
        \[
        \modified{\delta}(s, (a, z)) \defas \begin{cases}
            \delta(s, a)
            & o(s) = z \\
            \1[s]
            & o(s) \not = z 
        \end{cases}
        \]
        \item 
        $\modified{\Z} \defas \{ \# \}$ a unique observation;
        \item 
        $\modified{o} \equiv \#$ a uninformative observation function.
    \end{itemize}
    We show that the value of this blind MDP $\modified{P}$ is the same as the original POMDP $P$.
    
    Consider an arbitrary memoryless policy $\sigma \colon \Z \to \Delta(\A)$ in the POMDP $P$.
    Note that $\sigma \in \Delta(\A)^\Z$ is a collection of distributions.
    Define the memoryless policy $\modified{\sigma} \colon \modified{\Z} \to \Delta(\modified{\A})$ in $\modified{P}$, which we identify with an element of $\Delta(\modified{\A})$, by a uniform choice over the distributions in $\sigma$, i.e.,
    \[
    \modified{\sigma}((a, z)) \defas \frac{1}{|Z|} \sigma(z)(a) \,.
    \]
    In other words, the policy $\modified{\sigma}$ chooses an observation $z$ uniformly at random and then an action according to the distribution $\sigma(z)$.
    
    The coupling between the blind MDP and the POMDP consists in projecting the dynamic of the blind MDP to those times where the tuple of action and observation $(a, z)$ is such that $z$ is the observation of the current state.
    Formally, define a sequence of random times $(\tau_t)_{t \ge 0}$ defined inductively by $\tau_0 \defas \inf \{ t \ge 0 : \exists a \in \A \quad \modified{A}_t = (a, o(s_0)) \}$ and, for $t \ge 1$,
    \[
    \tau_t \defas \inf \{ t > \tau_{t - 1} : \exists a \in \A \quad \modified{A}_t = (a, o(S_{t - 1})) \} \,.
    \]
    In other words, $\tau_t$ is the $t$-th time that, in the dynamic of the blind MDP, the second coordinate of the action chosen by $\modified{\sigma}$ coincides with the observation of the current state in the original POMDP.
    These times are almost surely finite since $\modified{\sigma}$ chooses an observation uniformly at random at each step.
    Notice that, after coupling the transitions of $\sigma$ and $\modified{\sigma}$ in the obvious way, $(S_{\tau_t})_{t \ge 0}$ under $\modified{\sigma}$ and $(S_t)_{t \ge 0}$ under $\sigma$ follow the same dynamic.
    In particular, the probability of reaching the target is equal in the blind MDP and the POMDP.
    Therefore, the reachability value of the blind MDP is at least as large as the reachability value of the POMDP.
    
    Consider an arbitrary memoryless policy $\modified{\sigma} \colon \modified{\Z} \to \Delta(\modified{\A})$ in the blind MDP $\modified{P}$, or equivalently an element of $\Delta(\modified{\A})$.
    Define the memoryless policy $\sigma \colon \Z \to \Delta(\A)$ in $P$ by
    \[
    \sigma(z)(a) \defas \frac{\modified{\sigma}((a, z))}{ \sum_{\another{a} \in \A} \modified{\sigma}((\another{a}, z))} \,.
    \]
    In other words, for each observation, we consider the conditional distribution of $\modified{\sigma}$ on the actions that have that observation as a second coordinate.
    
    Just as before, the same coupling shows $(S_{\tau_t})_{t \ge 0}$ under $\modified{\sigma}$ and $(S_t)_{t \ge 0}$ under $\sigma$ follow the same dynamic.
    Therefore, the reachability value of the POMDP is at least as large as the reachability value of the blind MDP.
    We conclude that both POMDPs have the same value.
\end{proof}

\paragraph{Puiseux function policy.}
A (memoryless) Puiseux function policy $\policy$ is a function $\policy \colon [0, \eps_0) \to \Delta(\A)$.
Note that, for all $\eps \in [0, \eps_0)$, the policy $\policy(\eps)$ is a memoryless policy and, together with an initial state $s$, induces a Markov chain whose measure is denoted $\PP^{\policy(\eps)}_s$.
For example, for some $a \in \A$, we may have that $\policy(\eps)_a = 1 / (2 - \eps)$, which is a probability for $\eps \in [0, 1]$.

The following result establishes the existence of Puiseux function policy witnesses for blind MDPs.

\begin{lemma}
    \label{Result: function witness}
    Consider a blind MDP $P = (\S, \A, \delta, s_0)$ and a target state $\top \in \S$.
    Then, $P$ is limit-sure winning under memoryless policies if and only if the following decision problem for the first-order theory of the reals has a solution in the real-closed field of Puiseux functions.
    
    $\forall \lambda < 1$ 
    $\exists (\sigma_a)_{a \in \A}$ 
    $\exists (v_s)_{s \in \S}$ such that 
    \begin{itemize}
        \item 
        Policy: for all $a \in \A$, we have that $\sigma_a \ge 0$, and $\sum_{a \in \A} \sigma_a = 1$.
        \item 
        Fixpoint: for all $s \in \S$, we have that $v$ satisfies 
        \[
        v_s = \sum_{\another{s} \in \S} \sum_{a \in \A} \sigma_a \delta(s, a)(\another{s}) \, v_{\another{s}} \,.
        \]
        \item 
        Minimal solution:
        $\forall (u_s)_{s \in \S}$, if $u$ satisfies the previous fixpoint equation, then, for all $s \in \S$, $v_s \le u_s$.
        \item 
        Value:
        $v_{s_0} \ge \lambda$.
    \end{itemize}
\end{lemma}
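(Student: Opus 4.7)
The plan is to interpret the given sentence first in $\RR$, show it is equivalent there to limit-sure winning under memoryless policies via the minimal-fixpoint characterization of reachability values (\Cref{Result: value characterization}), and then transfer the equivalence to the field of Puiseux functions by the Tarski--Seidenberg principle (\Cref{Result: Tarki's principle}), which applies since the Puiseux functions form a real-closed field extending $\RR$ (\Cref{Result: Puiseux functions}). In essence, the sentence is a purely syntactic encoding of the assertion ``the reachability value from $s_0$ under memoryless policies is arbitrarily close to $1$''.

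For the equivalence in $\RR$, I would fix a memoryless policy $\sigma \in \Delta(\A)$. Together with the blind MDP, it induces a Markov chain with transition probabilities $\delta^\sigma(s)(\another{s}) = \sum_{a \in \A} \sigma_a \, \delta(s, a)(\another{s})$. By \Cref{Result: value characterization}, the vector $s \mapsto \PP^\sigma_s(\Reach(\top))$ is the componentwise minimal solution in $[0,1]^\S$ of the fixpoint system appearing in the lemma (together with the standard convention $v_\top = 1$ built into the Bellman operator for reachability). Hence the inner existential block ``$\exists \sigma, v$: policy, fixpoint, minimality, $v_{s_0} \ge \lambda$'' is satisfiable iff some memoryless policy $\sigma$ attains reachability at least $\lambda$ from $s_0$; quantifying universally over $\lambda < 1$ then yields exactly $\sup_\sigma \PP^\sigma_{s_0}(\Reach(\top)) = 1$, the definition of limit-sure winning under memoryless policies. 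The whole sentence lives in the first-order theory of ordered fields with parameters drawn only from the transition probabilities of the input, so \Cref{Result: Tarki's principle} transfers its truth value between $\RR$ and the real-closed field of Puiseux functions, finishing the proof.

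The main obstacle is the careful setup of the minimality clause, so that it truly singles out the reachability-value vector and not a degenerate fixpoint. Without pinning down $v_\top$, the all-zero vector is itself a Bellman fixpoint and would dominate the componentwise minimum, collapsing $v_{s_0}$ to $0$ and trivializing the sentence. The clean fix is to understand the fixpoint clause as implicitly including $v_\top = 1$ (equivalently, treat $\top$ as absorbing with a separately imposed boundary condition), so that the universal quantifier over $u$ in the minimality clause ranges only over solutions of the corrected system, exactly matching the hypotheses of \Cref{Result: value characterization}. Once this convention is in place the remainder of the argument is a direct application of the value characterization theorem followed by Tarski--Seidenberg.
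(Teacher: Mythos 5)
Your proof is correct and follows essentially the same route as the paper: it reduces limit-sure winning to the minimal-fixpoint characterization of the reachability value over $\RR$ (\Cref{Result: value characterization}) and then transfers truth of the resulting first-order sentence to the real-closed field of Puiseux functions via \Cref{Result: Tarki's principle} and \Cref{Result: Puiseux functions}. Your remark that the fixpoint clause must be read as carrying the boundary condition $v_\top = 1$ (so that the minimality clause does not collapse to the all-zero fixpoint) is a valid point that the paper's proof leaves implicit when invoking \Cref{Result: value characterization}.
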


\begin{proof}[Proof sketch]
    We follow an approach similar to~\cite{bewley1976AsymptoticTheoryStochastic} where we characterize the limit-sure winning problem as a decision problem in the first order theory of the reals: the POMDP is limit-sure winning if and only if this decision problem is true.
    By Tarski's principle~\cite[Theorem 2.80, page 70]{basu2006AlgorithmsRealAlgebraic}, the decision problem is true if and only if it has a witness in the field of Puiseux functions, which is a real-closed field.
    Therefore, limit-sure winning POMDPs have Puiseux functions policy witnesses.
\end{proof}

\begin{proof}
    Consider a blind MDP $P$.
    Recall that $P$ is limit-sure winning under memoryless policies if and only if
    \[
    \sup_{\sigma \in \Sigma_0} \PP_{s_0}^\sigma(\Reach(\top)) = 1 \,.
    \]
    In other words, if and only if, for all $\lambda < 1$ there exists a policy $\sigma_\lambda \in \Sigma_0$ such that $\PP_{s_0}^\sigma(\Reach(\top)) \ge \lambda$.
    Recall that, since $P$ is a blind MDP, the set of memoryless policies $\Sigma_0$ is equivalent to $\Delta(\A)$, so an element $\sigma \in \Sigma_0$ is fully determined by the probability it assigns to each action.
    
    By \Cref{Result: value characterization}, a policy $\sigma_\lambda \in \Sigma_0$ is such that $\PP_{s_0}^{\sigma_\lambda} (\Reach(\top)) \ge \lambda$ if and only if the corresponding Markov chain has a value vector such that $v_{s_0} = v(s_0) \ge \lambda$.
    So far, we conclude that $P$ is limit-sure winning under memoryless policies if and only if the stated decision problem for the first-order theory of the reals has a solution in $\RR$.
    By \Cref{Result: Tarki's principle}, since the Puiseux functions is a real-closed field by \Cref{Result: Puiseux functions} and it contains $\RR$, we conclude the proof.
\end{proof}

\paragraph{Graph of communicating classes.}
A memoryless Puiseux function policy $\sigma$ on a blind MDP induces a Puiseux Markov chain, which defines communicating classes.
The graph of communicating classes is a directed graph with one vertex per communicating class and an edge between two classes if the support of the exit distribution of one class contains a state in the other.
Formally, consider $G = (\V, \E)$ where $\V = \{ \B \subseteq \S : \B \text{ is a communicating class }\}$ and $(\B, \another{\B}) \in \E$ if and only if $\supp(\delta(\B, \cdot)) \cap \another{\B} \not = \emptyset$. The graph of communicating classes for \Cref{Example: Almost-Sure is not Limit-Sure}, induced by the Puiseux policy $\policy$, where $\policy_\eps(w) = 1 - \eps$ and $\policy_\eps(c) = \eps$, is illustrated in \Cref{Figure: Graph of communicating classes}.

\begin{figure}[ht]
\centering
\begin{tikzpicture}[
    node distance = 1cm and 1.5cm,
    every initial by arrow/.style = {thick},
    thick,
    ->,
    state/.style={circle, draw, minimum size=1cm}
  ]

  \node (B1) [state] {$B_1 = \{s_0\}$};
  \node (B2) [state, right = of B1] {$B_2 = \{s_1\}$};
  \node (B4) [state, right = of B2] {$B_4 = \{\top\}$};
  \node (B3) [state, left = of B1] {$B_3 = \{\bot\}$};

  \path
    (B1) edge[above] node[above] {} (B2)
    (B2) edge node[above] {} (B4);

\end{tikzpicture}
\caption{
	Graph of communicating classes induced by the Puiseux strategy $\policy$, where $\policy_\eps(w) = 1 - \eps$ and $\policy_\eps(c) = \eps$, for \Cref{Example: Almost-Sure is not Limit-Sure}. 
	Each node represents a communicating class in the induced Puiseux Markov chain. 
	Directed edges denote non-zero exit probabilities (in the limit $\varepsilon \to 0$) between classes. 
}
	\label{Figure: Graph of communicating classes}
\end{figure}
	
The following result shows that communicating classes have a laminar structure.

\begin{lemma}
    \label{Result: Laminar communicating classes}
    Consider a Puiseux Markov chain with a set of states $\S$ and disjoint communicating classes $\B_1, \B_2, \ldots, \B_k \subseteq \S$, with $k \ge 2$.
    We have that $\B \defas \sqcup_{i \in [k]} \B_i$ is a communicating class if and only if  $\{ \B_1, \B_2, \ldots, \B_k \}$ is a bottom strongly connected component in the graph with vertices $ \{ \B_1, \B_2, \ldots, \B_k \} \sqcup \{ \bot \}$ and edges 
    \[
    \{ (\B_i, \B_j) : \exists s \in \B_j, \, s \in \supp(\delta(\B_i, \cdot)) \}
    \sqcup \{ (\B_i, \bot) : \exists s \in \S \setminus \B, \, s \in \supp(\delta(\B_i, \cdot)) \} \,.
    \]
\end{lemma}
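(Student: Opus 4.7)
The statement is an equivalence, so I address both directions. For the forward direction (\emph{only if}), assume $\B = \sqcup_{i} \B_i$ is a communicating class. If some $\B_i$ has an edge to $\bot$, pick $s^* \in \supp(\delta(\B_i,\cdot)) \cap (\S \setminus \B)$, so that $\delta(\B_i, s^*) > 0$, and any $\tilde s \in \B_j$ with $j \ne i$ (available since $k \ge 2$). Starting from any $s \in \B_i$, reaching $\tilde s \notin \B_i$ requires first exiting $\B_i$, and the limit probability that this exit lands at $s^* \notin \B$ equals $\delta(\B_i, s^*) > 0$ by \Cref{Result: exit distribution}, contradicting $\lim_{\eps \to 0^+} \PP_s^\eps(\exit(\B) < \reach(\tilde s)) = 0$. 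If instead some $\B_j$ is unreachable from some $\B_i$ in the auxiliary graph, let $\mathcal{R}$ be the collection of sub-classes reachable from $\B_i$; by the no-$\bot$-edge condition just established, exits from classes in $\mathcal{R}$ never leave $\sqcup_{\B_l \in \mathcal{R}} \B_l$, so in the limit the process from $s \in \B_i$ never reaches $\tilde s \in \B_j$, contradicting $\lim_{\eps \to 0^+} \PP_s^\eps(\Reach(\tilde s)) = 1$.

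\textbf{Reverse direction.} Assume $\{\B_i\}$ is a bottom strongly connected component. Two facts drive the argument: (i) the absence of edges to $\bot$ yields $\supp(\delta(\B_i, \cdot)) \subseteq \B$ for every $i$, so in the $\eps \to 0^+$ limit every sub-class exit stays in $\B$; (ii) strong connectivity of the graph provides, for each ordered pair $(\B_i, \B_j)$, a directed chain of positive-mass exits linking them. I consider the contracted process that records only the sequence of sub-classes visited. In the limit, by \Cref{Result: exit distribution}, its one-step kernel is a well-defined stochastic matrix on $\{\B_1, \ldots, \B_k\}$ (by (i) together with non-absorption of each $\B_i$, which follows from it having an outgoing edge in the BSCC) and irreducible (by (ii)); on a finite state space this forces recurrence, so starting from any $s \in \B_i$ every $\B_j$ is visited in finite time with limit probability~$1$. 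Composing this with the communicating-class property inside each $\B_j$, which ensures every state of $\B_j$ is visited before exit with limit probability~$1$, yields both $\lim_{\eps \to 0^+} \PP_s^\eps(\exit(\B) < \reach(\tilde s)) = 0$ and $\lim_{\eps \to 0^+} \PP_s^\eps(\Reach(\tilde s)) = 1$ for all $s, \tilde s \in \B$, which is exactly what is needed for $\B$ to be communicating.

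\textbf{Main obstacle.} The delicate step is the rigorous handling of the contracted chain in the reverse direction: for any fixed $\eps > 0$ a sub-class exit can land in $\S \setminus \B$ with positive probability, and there is no uniform a priori bound on the number of sub-class transitions, so the small escape probabilities could in principle accumulate over the infinite horizon needed to traverse the BSCC. I plan to resolve this via the exit-graph characterization of \Cref{Result: characterization exit distribution}: every exit graph of $\B$ terminating in $\S \setminus \B$ must include a transition whose Puiseux weight strictly exceeds that of some exit graph of a $\B_i$ remaining in $\B$ (which exists by (i) and (ii) since the BSCC has nontrivial internal exits), so the ratio defining $\delta(\B, \tilde s)$ for $\tilde s \in \S \setminus \B$ vanishes in the limit and $\B$ is absorbing. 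Once absorption is in hand, reachability of every $\tilde s \in \B$ inside $\B$ reduces to standard finite-state irreducibility of the contracted chain.
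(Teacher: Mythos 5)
Your overall architecture matches the paper's proof: the forward direction via an exit-distribution contradiction for the no-edge-to-$\bot$ part and a closure/reachability argument for strong connectivity, and the reverse direction via the contracted class-to-class process driven by the limiting exit kernel. That main line is sound. The genuine problem is the paragraph in which you resolve your self-declared ``main obstacle'': the claim that the BSCC condition makes $\B$ absorbing is both circular and false. It is circular because \Cref{Result: exit distribution} and \Cref{Result: characterization exit distribution} are stated only for communicating classes, so you cannot invoke the exit-graph formula for $\delta(\B,\cdot)$ before proving that $\B$ is a communicating class, which is exactly the statement under proof. It is false because the union need not be absorbing: take $\S=\{1,2,3\}$ with $\delta^\eps(1)(2)=1-\eps$, $\delta^\eps(1)(3)=\eps$, $\delta^\eps(2)(1)=1-\eps$, $\delta^\eps(2)(3)=\eps$, and $3$ absorbing. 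The singletons $\{1\},\{2\}$ are communicating classes whose exit distributions charge only each other, so they form a bottom strongly connected component with no edge to $\bot$, and $\B=\{1,2\}$ is indeed a communicating class; yet for every fixed $\eps>0$ the chain leaves $\B$ almost surely and always at state $3$, so $\delta(\B,3)=1$ and $\B$ is not absorbing. Relatedly, your weight comparison does not parse: every exit graph of $\B$ has out-degree at least one on all of $\B$ and is acyclic, hence necessarily contains an edge into $\S\setminus\B$; there is no ``exit graph of $\B$ remaining in $\B$'' to compare against --- you are conflating exit graphs of $\B$ with exit graphs of the individual $\B_i$.

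The accumulation issue you identify is real but is resolved differently (and this is essentially what the paper does, if tersely): what is needed is not absorption of $\B$ but a quantitative hitting bound for the contracted process that is uniform in small $\eps$. Since the limiting class-to-class kernel is stochastic and irreducible, there is $\nu>0$ (the minimal positive limiting exit probability) such that, for all small $\eps$, within any $k$ consecutive class-exits the target class $\B_j$ is visited with probability at least $\nu^k-o(1)$, while the probability of leaving $\B$ at any single class-exit is $o(1)$ as $\eps\to 0^+$; a geometric-trials estimate then shows that the total probability of escaping $\B$ before visiting $\B_j$ (and, inside $\B_j$, before reaching the specific state $\tilde s$, using the within-class communicating property) vanishes, which gives both $\lim_{\eps\to 0^+}\PP^\eps_s(\exit(\B)<\reach(\tilde s))=0$ and $\lim_{\eps\to 0^+}\PP^\eps_s(\Reach(\tilde s))=1$. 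With that replacement your reverse direction goes through and coincides with the paper's argument; your forward direction is fine as written.
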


\begin{proof}[Proof sketch]
    The graph of the statement considers edges based on the exit distribution of communicating classes.
    On the one hand, if $\B$ is a communicating class, then starting in $\B$ the dynamic reaches every other state in $\B$ before exiting it. 
    In particular, the exit distribution connects communicating classes between each other without leading to states outside.
    On the other hand, if $\{ \B_1, \B_2, \ldots, \B_k \}$ is a bottom strongly connected component, then exit distributions link the classes, ensuring mutual reachability. 
    Therefore, starting in a state in $\B$ the dynamic reaches every other state in $\B$ before exiting it, so $\B$ is a communicating class.
\end{proof}

\begin{proof}
    Assume that $\B$ is a communicating class. 
    We show that $\B$ is a bottom strongly connected component in the graph of the statement.
    Consider $i \in [k]$ arbitrary. 
    We show that all edges of $\B_i$ lead to communicating classes in $\B$, i.e., $\supp( \delta(\B_i, \cdot) ) \subseteq \B$.
    By contradiction, assume that $\another{s} \in \supp( \delta(\B_i, \cdot) ) \cap \S \setminus \B$, equivalently, the graph of the statement contains an edge $(\B_i, \bot)$.
    Because $\B$ is a communicating class and $k \ge 2$, there exists $j \not = i$ and $\B_j$ such that $(\B_i, \B_j)$ is an edge in the graph of the statement.
    Consider states $s \in \B_i$ and $\another{\another{s}} \in \B_j$ where $\another{\another{s}}$ is such that $\delta( \B_i, \another{\another{s}}) > 0$.
    On the one hand, because $\B$ is a communicating class, $\another{s}$ is reached starting from $s$ before exiting $\B$, i.e., 
    \[
    \lim_{\eps \to 0^+} \PP_{s}^\eps( \exit(B) < \reach(\another{\another{s}}) ) = 0 \,.
    \]	
    On the other hand, by definition of exit distribution, there is a positive limit probability to exit $\B_i$ through $\another{s}$ which is outside of $\B$, i.e.,   
    \[
    \lim_{\eps \to 0^+} \PP_{s}^\eps( \Exit(B, \another{s})) \ge \delta( \B_i, \another{s} ) > 0 \,.
    \]
    This is a contradiction.
    Therefore, $\supp( \delta(\B_i, \cdot) ) \subseteq \B$.
    We are left with showing that $\B$ is strongly connected in the graph of the statement.
    
    Consider $i \not = j \in [k]$ arbitrary. 
    We show that $\B_i$ is connected to $\B_j$ in the graph of the statement.
    Consider $s \in \B_i$ and $\another{s} \in \B_j$.
    On the one hand, because $\B$ is a communicating class,  $\another{s}$ is reached starting from $s$ before exiting $\B$.
    On the other hand, the set of all reachable states from $s$ before having left $\B$ is characterized by the following procedure.
    Start including $s$.
    First, closure by communicating class, if $\another{\another{s}}$ is reachable from $s$ and $\another{\another{s}} \in \B_\ell$ with $\ell \in [k]$, then all states in $\B_\ell$ are included.
    Second, closure by exit distribution, if a state is in the support of the exit distribution of the reachable communicating classes, then it also is included.
    Repeat these closures until no more states are included to obtain the set of all reachable states from $s$ before having left $\B$.
    In particular, $\another{s}$ must be included in one of these two closures, and $\B_i$ is connected to $\B_j$ through, for example, a minimal sequence of additions in this process to include $\another{s}$ as a reachable state from $s$ before having left $\B$.
    We conclude that $\B$ is a bottom strongly connected component in the graph of the statement.
    
    Assume that $\B$ is a bottom strongly connected component in the graph of the statement. 
    We show that $\B$ is a communicating class.
    Consider $s, \another{s} \in \B$.
    We show that the limit probability of starting at $s$ and reaching $\another{s}$ before leaving $\B$ is $1$. 
    Consider $i, j \in [k]$ such that $s \in \B_i$ and $\another{s} \in \B_j$.
    By definition of communicating classes, transitions within a communicating class occur before exiting it, so it is sufficient to consider only the transitions exiting communicating classes.
    Because $\B$ is a bottom strongly connected component in the graph of the statement, the exit distribution of all communicating classes leads to states in $\B$. 
    In particular, the transitions between the communicating classes are taken infinitely more often than those exiting $\B$.
    Therefore, it is enough to show that starting from $\B_i$ the probability of having visited $\B_j$ after $k$ transitions between communicating classes is strictly positive.
    Denote the smallest positive exit probability $\nu \defas \min \{ \delta(\B_\ell, s) : \ell \in [k], s \in \S, \delta(\B_\ell, s) > 0 \} > 0$.
    Because $\B$ is strongly connected in the graph of the statement, there is a directed path between $\B_i$ and $\B_j$.
    Then, starting from $\B_i$ the probability of having visited $\B_j$ after $k$ transitions between communicating classes is at least $\nu^k > 0$. 
    We conclude that $\B$ is a communicating class.
\end{proof}

The laminar structure implies the following bound on the number of communicating classes.

\begin{corollary}
    \label{Result: Number of communicating classes}
    Consider a Puiseux Markov chain with a set of states $\S$.
    There are at most $2|\S| - 1$ communicating classes.
\end{corollary}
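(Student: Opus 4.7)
The plan is to reduce the corollary to the classical combinatorial fact that a laminar family of non-empty subsets of an $n$-element set has at most $2n - 1$ members. This requires two steps: (i) showing that the collection of all communicating classes of a Puiseux Markov chain is laminar, meaning any two are either disjoint or nested; (ii) invoking the counting bound on laminar families.

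For step (i), I would argue by contradiction. Suppose $\B_1, \B_2$ are communicating classes with $\B_1 \cap \B_2 \neq \emptyset$ but neither contained in the other, and pick witnesses $s_0 \in \B_1 \cap \B_2$, $s_1 \in \B_1 \setminus \B_2$, and $s_2 \in \B_2 \setminus \B_1$. Applying the first clause of the definition of communicating class to the pairs $(s_0, s_1) \in \B_1 \times \B_1$ and $(s_0, s_2) \in \B_2 \times \B_2$ gives
\[
\lim_{\eps \to 0^+} \PP^\eps_{s_0}\bigl(\reach(s_1) < \exit(\B_1)\bigr) = 1 \quad \text{and} \quad \lim_{\eps \to 0^+} \PP^\eps_{s_0}\bigl(\reach(s_2) < \exit(\B_2)\bigr) = 1,
\]
so by the union bound their intersection also has limit probability one. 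On any trajectory in this intersection one additionally has $\exit(\B_2) \le \reach(s_1)$ (because $s_1 \notin \B_2$) and $\exit(\B_1) \le \reach(s_2)$ (because $s_2 \notin \B_1$). Chaining these inequalities yields $\exit(\B_1) \le \reach(s_2) < \exit(\B_2) \le \reach(s_1) < \exit(\B_1)$, which is impossible. Hence the intersection event is empty, contradicting that it has limit probability one.

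For step (ii), I would induct on $|\S|$. Given a non-empty laminar family $\mathcal{F}$ of subsets of $\S$, let $M_1, \ldots, M_k$ be its maximal members, which are pairwise disjoint by laminarity. Each sub-family $\mathcal{F}_i \defas \{S \in \mathcal{F} : S \subseteq M_i\}$ is laminar on $M_i$, so the inductive hypothesis gives $|\mathcal{F}_i| \le 2|M_i| - 1$. Summing yields $|\mathcal{F}| = \sum_i |\mathcal{F}_i| \le 2\sum_i |M_i| - k \le 2|\S| - 1$, using $\sum_i |M_i| \le |\S|$ and $k \ge 1$. Applied to the laminar family of all communicating classes, this delivers the corollary. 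The hard part will be step (i): one must convert two limit-probability-one statements into a deterministic contradiction about reach/exit times on each trajectory in their intersection. Once laminarity is established, the $2n - 1$ bound is a textbook calculation.
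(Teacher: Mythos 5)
Your route is the one the paper intends: the corollary is stated there without an explicit proof, as an immediate consequence of the laminar structure of communicating classes together with the standard fact that a laminar family of nonempty subsets of an $n$-element set has at most $2n-1$ members, and your proposal makes both halves explicit. Your step (i) is in fact a useful supplement, since \Cref{Result: Laminar communicating classes} only characterizes when a disjoint union of classes is again a class and never literally states that two classes cannot cross; your crossing argument supplies exactly the statement the counting needs.

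Two concrete repairs are needed. First, the first clause of the definition only yields $\lim_{\eps\to 0^+}\PP^\eps_{s_0}\bigl(\reach(s_1)\le\exit(\B_1)\bigr)=1$, where both times may be infinite, and with the non-strict events your chaining gives no contradiction: a trajectory that stays in $\B_1\cap\B_2$ forever without visiting $s_1$ or $s_2$ lies in both non-strict events, so their intersection is not empty. You must also invoke the second clause, $\lim_{\eps\to 0^+}\PP^\eps_{s_0}\bigl(\Reach(s_1)\bigr)=1$, which forces $\reach(s_1)<\infty$ with limit probability one; since $s_1\in\B_1$ while $s_{\exit(\B_1)}\notin\B_1$, equality of the two times is impossible, so the strict-inequality events you display do have limit probability one and the rest of your deterministic chaining goes through. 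Second, the induction in step (ii) is circular in the case $k=1$ with $M_1=\S$, which occurs exactly when $\S$ itself is a communicating class: there the inductive hypothesis is applied to a ground set of the same size. The standard fix is to treat the member $\S$ separately (remove it and recurse on the maximal members of what remains, which are proper subsets), or to induct on $|\S|+|\mathcal{F}|$. Both repairs are routine; with them your argument is complete and matches the paper's intended (implicit) proof.
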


The following result characterizes reachable states in a Puiseux Markov chain.

\begin{lemma}
    \label{Result: Characterization of reachable states}
    Consider a Puiseux Markov chain with a set of states $\S$.
    The limit reachability is given by connectivity in the graph of communicating classes as follows.
    For all states $s, \another{s} \in \S$, we have that 
    \[
    \lim_{\eps \to 0^+}  \PP_{s}^\eps( \Reach(\another{s}) ) > 0 
    \]
    if and only if $\{ s \}$ is connected to a communicating class $\B \ni \another{s}$ in the graph of communicating classes.
\end{lemma}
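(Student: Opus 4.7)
The plan is to prove the two implications separately, leveraging the characterizations of exit distributions for Puiseux Markov chains.

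For the forward direction I would induct on the length of a directed path $\{s\} = \B_0 \to \B_1 \to \cdots \to \B_m$ in the graph of communicating classes with $\another{s} \in \B_m$. The base case $m = 0$ collapses to $\another{s} = s$, for which the claim is trivial. For the inductive step, each edge $\B_i \to \B_{i+1}$ supplies a state $s_{i+1} \in \B_{i+1} \cap \supp(\delta(\B_i, \cdot))$, and \Cref{Result: exit distribution} combined with \Cref{Result: characterization exit distribution} yields $\lim_{\eps \to 0^+} \PP^\eps_{s}(\Exit(\B_i, s_{i+1})) = \delta(\B_i, s_{i+1}) > 0$, independently of the starting state in $\B_i$. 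Furthermore, upon entering $\B_m$, the defining property of a communicating class guarantees $\lim_{\eps \to 0^+} \PP^\eps(\Reach(\another{s})) = 1$. Chaining these positive factors through the strong Markov property yields the desired positive limit reach from $s$ to $\another{s}$.

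For the backward direction I would argue the contrapositive. Define $R \subseteq \S$ to be the union of all communicating classes that are reachable from $\{s\}$ in the graph of communicating classes; since $\{s\}$ is itself a communicating class, $s \in R$, and by hypothesis $\another{s} \notin R$. The goal is to show that the trajectory started at $s$ remains in $R$ with probability tending to $1$ as $\eps \to 0^+$, which forces $\lim_{\eps \to 0^+} \PP^\eps_s(\Reach(\another{s})) = 0$. For closure, take any $r \in R$ and let $\B$ be a maximal communicating class containing $r$ that is reachable from $\{s\}$ in the graph. By \Cref{Result: characterization exit distribution}, the limit exit distribution of $\B$ is supported on states $r' \in \S \setminus \B$ satisfying $(\B, \{r'\}) \in \E$, so every such $r'$ lies in $R$; any alternative exit target corresponds to an exit graph of strictly larger weight and therefore contributes with probability vanishing with $\eps$.

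The main obstacle will be making rigorous the claim that excursions out of $R$ carry no mass in the limit, since for fixed $\eps > 0$ the chain may perform arbitrarily many sojourn-and-exit cycles before any reachability event is decided. I plan to handle this by using \Cref{Result: Number of communicating classes} to bound the number of maximal communicating classes traversed, and \Cref{Result: characterization exit distribution} to show that each exit from such a class lands in $R$ with probability $1 - o(1)$ as $\eps \to 0^+$. Taking the product over finitely many such sojourns and sending $\eps \to 0^+$ forces the total probability of ever visiting $\S \setminus R$ to zero, delivering $\lim_{\eps \to 0^+} \PP^\eps_s(\Reach(\another{s})) = 0$ and completing the contrapositive.
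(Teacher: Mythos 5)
Your forward direction is fine and is essentially the intended argument: chain, via the Markov property, the positive limit exit probabilities supplied by each edge of the path (\Cref{Result: exit distribution}, \Cref{Result: characterization exit distribution}) with the within-class guarantee $\lim_{\eps \to 0^+}\PP^\eps(\Reach(\another{s})) = 1$ once $\B_m$ is entered. The genuine gap is in the backward direction, exactly at the point you flagged, and your proposed fix does not close it: \Cref{Result: Number of communicating classes} bounds the number of \emph{distinct} communicating classes, but not the number of \emph{sojourns} — for fixed $\eps > 0$ the chain may re-enter the same maximal class again and again, and the typical number of re-entries grows without bound as $\eps \to 0^+$, so a product of ``finitely many'' factors of size $1 - o(1)$ does not control the total probability of ever leaving $R$. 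Concretely, take states $s, a, b$ with transitions $s \to a$ with probability $1 - \eps$, $s \to b$ with probability $\eps$, $a \to s$ with probability $1$, and $b$ absorbing: each sojourn in the class $\{s\}$ leaks to $b$ with probability only $\eps = o(1)$, yet over the order $1/\eps$ sojourns the leak accumulates to probability $1$. The lemma survives only because this accumulation is recorded by the larger communicating class $\{s, a\}$, whose limit exit distribution charges $b$, so $b$ lies in the closure $R$ anyway.

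What is missing is therefore an appeal to \Cref{Result: Laminar communicating classes}: you must argue at the level of maximal communicating classes and show that the maximal classes reachable from $\{s\}$ cannot recirculate among themselves with limit probability one, since otherwise their union would be a strictly larger communicating class, contradicting maximality. This yields that from every reachable non-absorbing maximal class the chain makes a ``limit-visible'' exit (one charged by the limit exit distribution, hence landing inside $R$) within a number of sojourns whose expectation is bounded uniformly in $\eps$ (via a $\nu^k$-type bound as in the proof of \Cref{Result: Laminar communicating classes}); only then does ``per-sojourn leak $o(1)$ times uniformly bounded expected number of sojourns'' give $\lim_{\eps \to 0^+}\PP^\eps_s(\Reach(\another{s})) = 0$. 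This is also how the paper proceeds: it characterizes the limit-reachable set as the closure of $\{s\}$ under whole communicating classes and their exit supports, letting the laminar structure absorb precisely the repeated-sojourn effects that your product argument misses.
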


\begin{proof}
    Note that the set of all reachable states in the limit from $s$, i.e., 
    \[
    \left\{ \another{s} \in \S : \lim_{\eps \to 0^+}  \PP_{s}^\eps( \Reach(\another{s}) ) > 0 \right \} \,,
    \] 
    is characterized as the outcome of the following procedure, which is similar to the one in the proof of \Cref{Result: Laminar communicating classes}.
    Start including $s$.		
    Start from $\{s\}$.
    First, closure by communicating class, if a state is included and this state is in some communicating class, then all states in the communicating class must be included.
    Second, closure by exit distribution, if a state is in the support of the exit distribution of a reachable communicating class, then it also must be included.
    Repeat the first and second closures until no more states are included to obtain the set of all reachable states in the limit from $s$.
    In particular, $\another{s}$ is reachable in the limit from $s$ if and only if the communicating class $\{s\}$ is connected to $\B \ni \another{s}$ through, for example, a minimal sequence of additions in this process to include $\another{s}$ as a reachable state from $s$.
\end{proof}

The following result characterizes limit-sure reachability in Puiseux Markov chains.

\begin{lemma}
    \label{Result: Characterization of limit-sure reachability}
    A Puiseux Markov chain, with a set of states $\S$ and a reachability objective, is limit-sure winning starting from $s \in \S$ if and only if, for all communicating classes $\B \subseteq \S$, if $\{s\}$ is connected to $\B$ in the graph of communicating classes, then $\B = \{ \top \}$ or the support of its exit distribution is not empty, i.e., $\supp( \delta (\B, \cdot) ) \not = \emptyset$.
\end{lemma}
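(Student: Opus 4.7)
The plan is to prove both directions of the equivalence by analyzing the long-term behavior of the Puiseux Markov chain via its graph of communicating classes, combining the characterization of reachable states (\Cref{Result: Characterization of reachable states}) with the vanishing exit probabilities from absorbing communicating classes (\Cref{Result: characterization exit distribution}). Throughout, I implicitly use that $\top$ is absorbing, so that any communicating class containing $\top$ must equal $\{\top\}$.

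For the necessity direction I would proceed by contrapositive. Suppose some communicating class $\B$ reachable from $\{s\}$ in the graph of communicating classes satisfies $\B \neq \{\top\}$ and $\supp(\delta(\B, \cdot)) = \emptyset$. Then $\top \notin \B$, and by \Cref{Result: Characterization of reachable states} there exists $\another{s} \in \B$ with $\lim_{\eps \to 0^+} \PP_s^\eps(\Reach(\another{s})) > 0$. The empty exit distribution together with \Cref{Result: characterization exit distribution} give $\lim_{\eps \to 0^+} \PP_{\another{s}}^\eps(\exit(\B) < \infty) = 0$, so once the chain reaches $\another{s}$ it remains in $\B$ with limit probability one. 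Combining these observations, a positive limit probability mass avoids $\top$ forever, contradicting limit-sure winning.

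For the sufficiency direction I would argue that, in the limit $\eps \to 0^+$, the chain started at $s$ is absorbed with probability one in some absorbing communicating class reachable from $\{s\}$. By hypothesis the only such reachable absorbing class is $\{\top\}$, whence $\lim_{\eps \to 0^+} \PP_s^\eps(\Reach(\top)) = 1$. To establish the absorption claim I would induct on the reachable set of maximal communicating classes, which is finite by \Cref{Result: Number of communicating classes}: from any non-absorbing class $\B$, \Cref{Result: exit distribution} ensures the chain exits $\B$ in the limit into some state distributed according to $\delta(\B, \cdot)$, at which point the inductive hypothesis applies to a strictly smaller reachable set, using \Cref{Result: Laminar communicating classes} to rule out cycles among maximal classes so that the induction is well-founded.

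The main obstacle is making the sufficiency induction rigorous: for each fixed $\eps > 0$ the chain exits its current class only over a random time that diverges as $\eps \to 0^+$, so one must iterate along the graph of communicating classes with care. I would handle this by passing to the \emph{induced chain} on maximal communicating classes whose transitions are the exit distributions $\delta(\B, \cdot)$; limit-sure reachability in the original chain then reduces to sure reachability in this induced finite Markov chain. Ensuring the equivalence between reachability in the graph of communicating classes (which by definition includes all singletons and non-maximal classes) and reachability in the induced chain over maximal classes will require an additional coupling argument building on \Cref{Result: Laminar communicating classes}.
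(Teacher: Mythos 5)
Your necessity direction is sound and is essentially the paper's own argument: connectivity to an absorbing class $\B \neq \{\top\}$ plus \Cref{Result: Characterization of reachable states} yields a positive limit probability of entering $\B$ and never leaving it, contradicting limit-sure winning (the convention that $\top$ is absorbing, which you state, is also what the paper uses to get that the only class containing $\top$ is $\{\top\}$).

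The sufficiency direction, however, has a genuine gap: the claim that \Cref{Result: Laminar communicating classes} rules out cycles among maximal communicating classes is false, so your induction is not well-founded as stated. That lemma only merges a strongly connected family of classes into a larger communicating class when the family is a \emph{bottom} strongly connected component, i.e.\ when no exit mass leaks outside the union; if some class in the cycle also has exit support outside the union, the union is not a communicating class and the maximal classes retain a genuine cycle. For a concrete example, take states $\{s_1, s_2, s_3\}$ with ($\eps$-independent) transitions $\delta^\eps(s_1) = \tfrac12 \1[s_2] + \tfrac12 \1[s_3]$, $\delta^\eps(s_2) = \1[s_1]$, and $s_3$ absorbing: here $\{s_1\}$ and $\{s_2\}$ are maximal communicating classes with edges in both directions, since $\{s_1, s_2\}$ fails the communicating-class condition ($\PP_{s_1}^\eps(\exit(\{s_1,s_2\}) < \reach(s_2)) = \tfrac12$). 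Consequently the "induced chain on maximal classes" is not acyclic, and the interchange-of-limits issue you flag for that induced chain (exit times diverging as $\eps \to 0^+$) is left unresolved rather than handled. The paper sidesteps both problems with a qualitative trap argument that you would need here: if with positive limit probability the dynamics stayed forever in some set $\mathcal{C}$ with $s \in \mathcal{C}$ and $\top \notin \mathcal{C}$, consider the graph of communicating classes restricted to classes contained in $\mathcal{C}$ and take a bottom strongly connected component reachable from $\{s\}$; by \Cref{Result: Laminar communicating classes} the union of its classes is a communicating class, reachable from $\{s\}$, different from $\{\top\}$, and with empty exit support precisely because the component is bottom — contradicting the hypothesis. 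This replaces your induction and avoids any quantitative absorption argument in the induced chain.
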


\begin{proof}[Proof sketch]
    On the one hand, if $P$ is limit-sure winning, then, by \Cref{Result: Characterization of reachable states}, $\{s\}$ is connected to $\{ \top \}$ in the graph of communicating classes.
    By contradiction, if $\{s\}$ is connected to $\B$ and its exit distribution has empty support, then starting from $s$ the dynamic has positive probability of reaching and staying forever in $\B$, which contradicts the limit-sure winning property.
    On the other hand, if $s$ satisfies the assumptions, then we show that the dynamic eventually exits every subset of states containing $s$ and not $\top$. 
    Therefore, the dynamic reaches $\top$ with probability one in the limit, which proves that $P$ is limit-sure winning.
\end{proof}

\begin{proof}
    Consider a Puiseux Markov chain $P$ and a state $s \in \S$.
    Assume that $P$ is limit-sure winning starting from $s$.
    On the one hand,
    \[
    \lim_{\eps \to 0^+} \PP^{\sigma(\eps)}_{s}( \Reach(\top) ) = 1 > 0 \,.
    \]
    In particular, by \Cref{Result: Characterization of reachable states}, $\{ s \}$ is connected to a communicating class $\B \ni \top$.
    Note that, by definition, $\B = \{ \top \}$ is an absorbing communicating class.
    By \Cref{Result: Laminar communicating classes}, the only communicating class containing $\top$ is $\{ \top \}$.
    We conclude that $\{s\}$ is connected to $\{ \top \}$ in the graph of communicating classes.
    On the other hand, consider a communicating class $\B$ such that $\{s\}$ is connected to $\B$ in the graph of communicating classes.
    Consider $\another{s} \in \B$. 
    By \Cref{Result: Characterization of reachable states}, 
    \[
    \lim_{\eps \to 0^+}  \PP_{s}^\eps( \Reach(\another{s}) ) > 0 \,.
    \]
    By contradiction, if $\B$ is absorbing, then 
    \[
    \lim_{\eps \to 0^+} \PP^{\sigma(\eps)}_{s}( \Reach(\top) ) \le 1 - \lim_{\eps \to 0^+}  \PP_{s}^\eps( \Reach(\another{s}) ) < 1 \,,
    \]
    which is a contradiction.
    We conclude that the support of the exit distribution of $\B$ is not empty.
    
    Assume that, for all communicating classes $\B \subseteq \S$, if $\{s\}$ is connected to $\B$ in the graph of communicating classes, then $\B = \{ \top \}$ or the support of its exit distribution is not empty, i.e., $\supp( \delta (\B, \cdot) ) \not = \emptyset$.
    We show that $P$ is limit-sure winning starting from $s$.
    Note that, starting from $s$, the dynamic either reaches $\top$ or stays in a subset of $\S \setminus \{ \top \}$ forever.
    For a subset $\mathcal{C} \subseteq \S$, denote the time from which the dynamic never leaves $\mathcal{C}$ again by
    \[
    \stay(\mathcal{C}) \defas \min \{ n \ge 0 : \forall \another{n} \ge n \quad s_{\another{n}} \in \mathcal{C} \} \,.
    \] 
    We show that, for all $\mathcal{C} \subseteq \S$ such that $s \in \mathcal{C}$ and $\top \not \in \mathcal{C}$, the probability of staying in $\mathcal{C}$ forever is zero, i.e., 
    \[
    \lim_{\eps \to 0^+} \PP^{\sigma(\eps)}_{s}( \stay(\mathcal{C}) < \infty ) = 0 \,. 
    \]
    Fix an arbitrary $\mathcal{C} \subseteq \S$ such that $s \in \mathcal{C}$, $\top \not \in \mathcal{C}$, and 
    \[
    \lim_{\eps \to 0^+} \PP^{\sigma(\eps)}_{s}( \Reach(\mathcal{C}) ) > 0 \,.
    \]
    Take $\another{s} \in \mathcal{C}$ such that $\lim_{\eps \to 0^+} \PP^{\sigma(\eps)}_{s}( \Reach(\another{s}) ) > 0$.
    By \Cref{Result: Characterization of reachable states}, there are communicating classes $\B, \another{\B}$ such that $s \in \B$, $\another{s} \in \another{\B}$, and $\B$ is connected to $\another{\B}$ in the graph of communicating classes.
    If $\another{\B} \not \subseteq \mathcal{C}$, then, by definition of communicating classes, the dynamic exits $\mathcal{C}$ in finite time and derive the result. 
    By contradiction, assume that all communicating classes reachable from $\B$ are contained in $\mathcal{C}$.
    Consider the graph of communicating classes restricted to the classes included in $\mathcal{C}$.
    Because this is a directed subgraph, it has a bottom strongly connected component reachable from $\B$.
    Consider $\another{\B}$ the union of all states in this bottom strongly connected component.
    By \Cref{Result: Laminar communicating classes}, $\another{\B}$ is a communicating class.
    By construction, $\{ s_0 \}$ is connected to $\another{\B}$.
    Because $\another{\B} \subseteq \mathcal{C}$, we have that $\top \not \in \another{\B}$.
    Therefore, by assumption, the support of its exit distribution is not empty, i.e., $\supp( \delta (\B, \cdot) ) \not = \emptyset$.
    But this is a contradiction with being a bottom strongly connected component.
    We conclude that, for all $\mathcal{C} \subseteq \S \setminus \{ \top \}$, if $\lim_{\eps \to 0^+} \PP^{\sigma(\eps)}_{s}( \Reach(\mathcal{C}) ) > 0$, then $\lim_{\eps \to 0^+} \PP^{\sigma(\eps)}_{s}( \exit(\mathcal{C}) < \infty ) = 0$. 
    Therefore, $\lim_{\eps \to 0^+} \PP^{\sigma(\eps)}_{s}( \Reach(\top) ) = 1$ and the Puiseux Markov chain is limit-sure winning.
\end{proof}

\paragraph{Rank policy witness.}
Given a blind MDP $P$, we say that a Puiseux policy $\policy$ is a \emph{witness} for limit-sure winning if 
\[
\lim_{\eps \to 0^+} \PP^{\policy(\eps)}_{s_0} ( \Reach(\top) ) = 1\,.
\]
A (memoryless) Puiseux policy $\policy \colon [0, \eps_0) \to \Delta(\A)$ is a \emph{rank} policy if, for all $a \in \A$, the function $\eps \mapsto \policy(\eps)(a)$ is either constant to zero or an integer power of the identity up to normalization, i.e., there exists a function $f \colon \A \times [0, \eps_0) \to [0, 1]$ such that, for all $a \in \A$, there exists $i \ge 0$ such that $f(a, \eps) = \eps^i$, and $\policy(\eps)(a) = f(a, \eps) / \sum_{a \in \A} f(a, \eps)$.
In particular, for rank policies, we have that $\eps_0 = \infty$. 

The following result shows the existence of rank policy witnesses.

\begin{lemma}
    \label{Result: rank witness existence}
    Consider a blind MDP $P = (\S, \A, \delta, s_0)$ and a target state $\top$.
    Then, $P$ is limit-sure winning under memoryless policies if and only if there is a rank policy witness.
    Moreover, the description of the rank policy is of polynomial size.
\end{lemma}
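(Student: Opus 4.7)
The plan is to prove both directions of the equivalence and then the polynomial-size bound. The converse direction is immediate: a rank policy is a Puiseux function policy, and if it satisfies $\lim_{\eps \to 0^+} \PP^{\policy(\eps)}_{s_0}(\Reach(\top)) = 1$, then for every $\lambda < 1$ a small enough $\eps$ makes the memoryless policy $\policy(\eps)$ achieve reachability at least $\lambda$, so $P$ is limit-sure winning under memoryless policies.

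For the forward direction, I would start from \Cref{Result: function witness}, which produces a Puiseux function policy witness $\policy$. For each action $a$, I extract the leading exponent $r_a \in \QQ_{\ge 0} \cup \{+\infty\}$ of $\eps \mapsto \policy(\eps)(a)$, with $r_a = +\infty$ when the series vanishes identically. Writing the finite $r_a$ with a common denominator $q$ and applying the substitution $\eps \mapsto \eps^q$ (an isomorphism of the field of Puiseux functions that preserves communicating classes and their exit distributions), I obtain integer ranks $k_a$ and define the candidate rank policy $\modified{\policy}(\eps)(a) \defas \eps^{k_a} / \sum_{b \in \A} \eps^{k_b}$. The key claim is that the Puiseux Markov chain induced by $\modified{\policy}$ has the same graph of communicating classes as the one induced by $\policy$: by \Cref{Result: characterization exit distribution}, the support of the exit distribution of every class is determined by which sets $G^{\min}_{\B}(s \to \another{s})$ are non-empty, which depends only on the leading exponents of the transition probabilities, and these are identical under $\policy$ and $\modified{\policy}$ since positivity of the base transitions forbids cancellation. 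Then \Cref{Result: Characterization of limit-sure reachability} ensures $\modified{\policy}$ is also a witness.

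The integer ranks $k_a$ produced above are not a priori polynomially bounded, so the final task is to show they can be chosen with polynomial description size. Using the laminar structure from \Cref{Result: Laminar communicating classes} and the bound of at most $2|\S|-1$ communicating classes from \Cref{Result: Number of communicating classes}, I would reassign ranks along the levels of this hierarchy using a super-increasing sequence such as $(|\S|+1)^i$ for $i = 0, 1, \ldots, 2|\S|-1$, chosen so that sums of at most $|\S|$ ranks (the maximal size of an exit graph's weight) retain the correct ordering. This bounds the ranks by $(|\S|+1)^{2|\S|}$, giving overall description size $O(|\S|^2 \log |\S|)$ bits. The main obstacle I anticipate is verifying that this re-ranking preserves the set of minimal-weight exit graphs for every communicating class, rather than merely the order of individual ranks: this requires the super-increasing spacing to injectively encode the relevant rank-sum comparisons, a property guaranteed because each exit graph uses at most $|\S|$ transitions and the base $|\S|+1$ rules out collisions among such sums.
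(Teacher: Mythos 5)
Your equivalence argument follows essentially the same route as the paper: the converse direction is immediate since a rank policy is a Puiseux policy; the forward direction goes through \Cref{Result: function witness}, extracts the leading exponents of $\eps \mapsto \policy(\eps)(a)$ (your $\eps \mapsto \eps^{q}$ substitution to clear denominators is a fine way to handle rational exponents), and argues via \Cref{Result: characterization exit distribution} that the supports of exit distributions, and hence the whole graph of communicating classes, depend only on these leading exponents because all transition coefficients are nonnegative and no cancellation can occur; \Cref{Result: Characterization of limit-sure reachability} then shows the induced rank policy is again a witness. Up to this point you match the paper.

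The gap is in your polynomial-size step. The constraints the new ranks must satisfy are not merely the order of the individual values $k_a$: by \Cref{Result: characterization exit distribution} they are comparisons, including \emph{equalities}, between sums of ranks taken over exit graphs, and a monotone but non-affine re-mapping of the rank values, such as sending the $j$-th smallest value to $(|\S|+1)^{j}$, does not preserve such comparisons. For instance, with ranks $1,10,11,12$ one has $1+12 < 10+11$, but after re-mapping to $1, B, B^{2}, B^{3}$ (any $B \ge 2$) the inequality reverses; similarly an equality like $1+3 = 2+2$ becomes strict, which can remove a minimal-weight exit graph, shrink the support of an exit distribution, and change the graph of communicating classes. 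Avoiding ``collisions'' among sums is not the property you need; you need all relevant rank-sum comparisons (equalities and directions of strict inequalities) to be preserved, and super-increasing spacing does not give that. The paper handles this differently: it records the ordering of the $i(a)$'s and the weight comparisons between exit graphs as a finite homogeneous system of linear (in)equalities in the variables $(i(a))_{a \in \A}$, uses homogeneity to replace each strict inequality by a non-strict one with slack $+1$, and then applies Cramer's-rule bounds to obtain a rational, hence after scaling integer, solution of polynomial bit-size; this works even though there are exponentially many exit-graph inequalities, since the size bound depends only on the number of variables and the coefficient magnitudes (at most $|\S|$). Replacing your re-ranking by this linear-system argument closes the gap.
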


\begin{proof}[Proof sketch]
    By \Cref{Result: function witness}, $P$ is limit-sure winning under memoryless policies if and only if there is a Puiseux policy witness.
    By \Cref{Result: Characterization of limit-sure reachability}, a Puiseux policy is a witness if and only if the respective graph of communicating classes has some properties.
    Note that the graph of communicating classes is defined only through the asymptotic behavior of the corresponding Puiseux Markov chain.
    Taking the communicating classes and the edges between them as a system of linear inequalities, we show the existence of a rank policy that induces the same graph of communicating classes and therefore is also a witness.
\end{proof}

\begin{proof}
    Consider a blind MDP $P = (\S, \A, \delta, s_0)$ and a target state $\top$.
    By \Cref{Result: function witness}, $P$ is limit-sure winning under memoryless policies if and only if there is a memoryless Puiseux function policy witness.
    In turn, by \Cref{Result: Characterization of limit-sure reachability}, a policy is a witness if and only if its graph of communicating classes satisfies some properties.
    We show that, if there is a Puiseux function policy whose graph satisfies these properties, then there is another polynomial-size rank policy that induces the same graph.
    
    Fix a Puiseux policy $\policy \colon [0, \eps_0) \to \Delta(\A)$ and consider the corresponding Puiseux Markov chain. 
    We claim that the graph of communicating classes is fully determined by the support of the exit distribution of communicating classes.
    Indeed, this graph can be constructed as follows.
    \begin{itemize}
        \item 
        Initialization. 
        Start by considering a communicating class for each singleton state.
        \item 
        Adding edges.
        Given the currently considered communicating classes, add all edges given by the support of their exit distribution.
        \item 
        Adding new communicating classes.
        Given the currently considered communicating classes, consider those that are not contained in another communicating class.
        With the support of their exit distribution, by \Cref{Result: Laminar communicating classes}, we find larger communicating classes if there are any.
    \end{itemize}
    By repeating the last two items until no other communicating class is found, we obtain the full graph of communicating classes.
    Therefore, this graph is fully determined by the support of the exit distribution of communicating classes.
    By \Cref{Result: characterization exit distribution}, the exit distribution of a communicating class is characterized in terms of exit graphs.
    Indeed, a state $\another{s}$ is in the support of the exit distribution of a communicating class $\B$ if and only if there exists a state $s \in \B$ and an exit graph $g$ in which $s$ can reach $\another{s}$, i.e., $g \in G_{\B}(s \to \another{s})$, such that the weight of $g$ is equal to the minimal weight of all exit graphs of $\B$, i.e., for all $\another{g} \in G_{\B}$ we have that $w(g) \le w(\another{g})$.
    We use this characterization to deduce the existence of a rank policy $\another{\policy}$ that induces the same graph as the policy $\policy$.
    
    Consider a parameterized function $f \colon \A \times [0, \eps_0) \to [0, 1]$, of parameters $(i(a))_{a \in \A}$, given by $f(a, \eps) = \eps^{i(a)}$.
    This function induces a policy $\another{\policy}(\eps)(a) = f(a, \eps) / \sum_{a \in \A} f(a, \eps)$, which in turn defines a Puiseux Markov chain with transitions
    \[
    \transition^\eps (s, \another{s}) = \sum_{a \in \A} \another{\policy}(\eps)(a) \transition(s, a)(\another{s}) = \frac{1}{\sum_{a \in \A} f(a, \eps)} \sum_{a \in \A} \eps^{i(a)} \transition(s, a)(\another{s}) \,.
    \]
    We show that there exist parameters $(i(a))_{a \in \A}$ such that the corresponding graph of communicating classes of $\another{\policy}$ coincides with the one given by $\policy$.
    Indeed, because the definition of exit distribution depends only on the weight of exit graphs, which is an asymptotic notion, we have that the weights of the policy $\policy$ induce a strategy with the same graph of communicating classes, i.e., defining
    \[
    i(a) \defas \inf \left\{ r \ge 0 : \lim_{\eps \to 0^+} \frac{\sum_{a \in \A} \policy(\eps)(a) \transition(s, a)(\another{s})}{\eps^r} \right\}
    \]
    we have that $\another{\policy}$ induces the same graph of communicating classes as $\policy$.
    We show that the parameters $(i(a))_{a \in \A}$ can be chosen to be integers of polynomial size.
    
    By the previous arguments, for simplicity and without loss of generality consider that the Puiseux policy $\policy$ is of the form $\policy(\eps)(a) = f(a, \eps) / \sum_{a \in \A} f(a, \eps)$, where $f \colon \A \times [0, \eps_0) \to [0, 1]$ is such that $f(a, \eps) = \eps^{i(a)}$.
    We construct a system of linear equations that is solved by $(i(a))_{a \in \A}$ and characterizes the induced graph of communicating classes.
    First, ranking of actions.
    Consider (strict) inequalities that characterize the order of $(i(a))_{a \in \A}$, i.e., (strict) inequalities of the form
    \[
    i(a) < i(\another{a}) 
    \qquad \text{or} \qquad 
    i(a) = i(\another{a}) \,.
    \]
    Second, support of exit distributions.
    Consider states $s, \another{s} \in \S$ and define the set actions that lead to the transition from $s$ to $\another{s}$ and are minimal in the ranking, i.e., 
    \[
    \mathcal{I}(s \to \another{s}) \defas \{ a \in \A : \delta(s, a)(\another{s}) > 0, \, \forall \another{a} \in \A, \, \delta(s, a)(\another{s}) > 0 \Rightarrow i(a) \le i(\another{a})  \} \,.
    \]
    Also, consider some selection and define $i(s \to \another{s}) \defas i(a)$, for some $a \in \mathcal{I}(s \to \another{s})$.
    Recalling that a state $\another{s}$ is in the support of the exit distribution of a communicating class $\B$ if and only if there exists a state $s \in \B$ and an exit graph $g$ containing the edge $(s, \another{s})$, i.e., $g \in G_{\B}(s \to \another{s})$, such that the weight of $g$ is equal to the minimal weight of all exit graphs of $\B$, i.e., for all $\another{g} \in G_{\B}$ we have that $w(g) \le w(\another{g})$.
    We write these restrictions as (strict) inequalities over $(i(a))_{a \in \A}$ by noticing that
    \[
    w(g) = \sum_{(s, \another{s}) \in g} i(s \to \another{s}) \,.
    \]
    Because there are finitely many communicating classes, and each of them has finitely many exit graphs, the graph of communicating classes induced by the policy $\another{\policy}$ is fully determined by a finite system of, possibly strict, inequalities over the variables $(i(a))_{a \in \A}$.
    These two sets of (strict) inequalities, along with positivity constraints, fully characterize the induced graph of communicating classes by $(i(a))_{a \in \A}$ in the following sense.
    Every solution of these inequalities $(i^*(a))_{a \in \A}$ define a function $f^* \colon \A \times [0, \eps_0) \to [0, 1]$, which defines a policy $\policy^*$.
    By the iterative construction of the graph of communicating class, the policy $\policy^*$ induces the same graph as $\policy$.
    We show that these inequalities have an integer solution $(i^*(a))_{a \in \A}$ of polynomial size.
    
    For a fixed order over $(i(a))_{a \in \A}$ and a selection defining $(i(s \to \another{s}))_{s, \another{s} \in \S}$, the inequalities considered are of the form
    \[
    i(a) \,\circ\, i(\another{a}),
    \qquad 
    \sum_{(s, \another{s}) \in g} i(s \to \another{s}) \,\circ\, \sum_{(s, \another{s}) \in \another{g}} i(s \to \another{s}),
    \qquad \text{or} \qquad 
    i(a) \ge 0 \,,
    \]
    where $\circ \in \{ <, \le \}$.
    Because this is a homogeneous system of equations, i.e., if $(i^*(a))_{a \in \A}$ is a solution, then, for all $\lambda > 0$, we have that $(\lambda  \cdot i^*(a))_{a \in \A}$ is also a solution, we can replace strict inequalities by inequalities that are not strict by adding $+ 1$ to the corresponding side of the inequality.
    Then, we consider a system of inequalities where $\circ$ is replaced by $\le$ or $\le 1 + $.
    Finally, we arrived at a system of linear equations constructed from the Puiseux policy $\policy$.
    Because this system of linear equations has a solution, it has a solution in the rational. 
    Moreover, the numerators and denominators of a solution can be bounded by Cramer's rule so they use polynomial size.
    Multiplying the rational solution of this system to obtain an integer solution of the original set of (strict) inequalities we finally conclude the existence of rank policy witnesses.
\end{proof}

\Cref{Result: rank witness existence} establishes the existence of a polynomial-size witness for limit-sure reachability of a blind MDP.
To prove the problem is in NP, the following result shows the existence of a polynomial time verifier that decides whether a rank policy is a witness of limit-sure reachability for a blind MDP or not.

\begin{lemma}
    \label{Result: Verifier existence}
    There exists a polynomial-time algorithm that, given a blind MDP and a rank policy, decides whether the rank policy is a witness of limit-sure reachability or not.
\end{lemma}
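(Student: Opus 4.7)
The plan is to algorithmically construct the graph of communicating classes induced by the given rank policy and then verify the conditions of \Cref{Result: Characterization of limit-sure reachability}, which completely characterize limit-sure winning in Puiseux Markov chains.

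First, I would precompute the weight of every Puiseux transition of the induced Markov chain. Since the policy is a rank policy with integer ranks $(i(a))_{a \in \A}$, the leading exponent of $\sum_{a \in \A} \eps^{i(a)} \delta(s, a)(\another{s})$ is simply $w(s, \another{s}) = \min\{i(a) : \delta(s, a)(\another{s}) > 0\}$, with $w(s, \another{s}) = +\infty$ when no such action exists. All these weights, as well as the common normalizer $\sum_{a \in \A} \eps^{i(a)}$, can be computed in polynomial time.

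Second, I would iteratively build the graph of communicating classes, mirroring the constructive procedure described in the proof of \Cref{Result: rank witness existence}: start with the singleton communicating classes, add an edge $(\B, \another{\B})$ whenever $\supp(\delta(\B, \cdot)) \cap \another{\B} \neq \emptyset$, and then identify bottom strongly connected components to merge into larger communicating classes using \Cref{Result: Laminar communicating classes}; repeat until stable. By \Cref{Result: Number of communicating classes} there are at most $2|\S| - 1$ communicating classes, so only polynomially many outer iterations are needed. Once the full graph is built, I would perform a standard reachability traversal from the singleton $\{s_0\}$ and accept if and only if every communicating class reachable from $\{s_0\}$ is either $\{\top\}$ or has nonempty exit support; correctness is immediate from \Cref{Result: Characterization of limit-sure reachability}.

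The main obstacle is the subroutine that computes, for a candidate class $\B$, the support of its exit distribution. By \Cref{Result: characterization exit distribution}, a state $\another{s} \in \S \setminus \B$ lies in $\supp(\delta(\B, \cdot))$ if and only if there exist $s \in \B$ and an exit graph $g \in G_{\B}(s \to \another{s})$ whose total weight $\sum_{(s, \another{s}) \in g} w(s, \another{s})$ equals the minimum weight over all of $G_\B$. An exit graph is an acyclic edge set in which every $s \in \B$ has an outgoing edge, which (after contracting $\S \setminus \B$ into a single sink) is exactly a spanning arborescence of $\B$ oriented toward that sink. The minimum-weight arborescence can therefore be computed in polynomial time by the Chu-Liu/Edmonds algorithm, and testing whether a specific edge $(s, \another{s})$ belongs to some minimum-weight exit graph reduces to forcing that edge into the arborescence and recomputing the optimum, which is again polynomial. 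Combining these subroutines with the iterative outer loop yields the desired polynomial-time verifier.
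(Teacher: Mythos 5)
Your proposal is correct and follows essentially the same route as the paper's proof: iteratively build the graph of communicating classes (singletons, exit-support edges, merging bottom strongly connected components via \Cref{Result: Laminar communicating classes}, bounded by \Cref{Result: Number of communicating classes}), compute exit-distribution supports by reducing exit graphs to minimum-weight directed spanning arborescences on a graph with $\S \setminus \B$ contracted to a sink (Chu-Liu/Edmonds in your write-up, Gabow et al.\ in the paper) together with an edge-forcing variant to test membership in a minimum-weight exit graph, and finally check the condition of \Cref{Result: Characterization of limit-sure reachability} by reachability from $\{s_0\}$. The only differences are cosmetic, e.g., the paper forces the edge $(s, \another{s})$ by contracting $\{s\} \cup (\S \setminus \B)$ and adding the cheapest such transition weight, which is the same device as your forced-edge recomputation.
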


\begin{proof}[Sketch proof]
    The algorithm has two main steps.
    First, it constructs the graph of communicating classes.
    Second, it checks whether the only absorbing communicating class reachable from the initial state $s_0$ is $\{ \top \}$ or not.
    The graph is constructed iteratively as in the proof of \Cref{Result: rank witness existence}.
    The main operations are adding edges and communicating classes.
    Each of them take polynomial time and, by \Cref{Result: Number of communicating classes}, there are at most $(2 |\S| - 1)$ communicating classes.
    For the second step, by the characterization in \Cref{Result: Characterization of limit-sure reachability}, we can decide limit-sure reachability running a depth-first search starting at $\{s_0\}$.
\end{proof}

\begin{proof}
    Consider a blind MDP and a rank policy.
    The algorithm constructs the graph of communicating classes iteratively and checks whether the only absorbing communicating class reachable from the initial state $s_0$ is $\{ \top \}$ or not.
    Concretely, the algorithm constructs this graph similar to the proof of \Cref{Result: rank witness existence} as follows.
    \begin{itemize}
        \item 
        Initialization. 
        Start by considering a communicating class for each singleton state.
        \item 
        Adding edges.
        Given the currently considered communicating classes, add all edges given by the support of their exit distribution.
        \item 
        Adding new communicating classes.
        Given the currently considered communicating classes, consider those that are not contained in another communicating class.
        With the support of their exit distribution, by \Cref{Result: Laminar communicating classes}, we find larger communicating classes if there are any.
    \end{itemize}
    By repeating the last two items until no other communicating class is found, we obtain the full graph of communicating classes.
    We show that this algorithm runs in polynomial time.
    
    By \Cref{Result: Number of communicating classes}, there are at most $(2 |\S| - 1)$ communicating classes.
    There are two relevant operations for the algorithm that computes the graph of communicating classes.
    First, computing the support of the exit distribution of a communicating class.
    Second, checking whether a new communicating class should be added.
    We show how to perform these operations in polynomial time.
    
    Fix a communicating class $\B \subsetneq \S$. 
    We compute the support of its exit distributions, i.e., $\supp( \transition(\B, \cdot) )$.
    Consider $\another{s} \in \S \setminus \B$.
    By \Cref{Result: exit distribution} and \Cref{Result: characterization exit distribution}, the state $\another{s}$ is in the support of the exit distribution of $\B$ if and only if there exists a state $s \in \S$ and an exit graph $g \in G_{\B}(s \to \another{s})$ whose weight coincides with the smallest weight among all exit graphs in $G_{\B}$, i.e., the weight of some exit graph is $G^{\min}_{\B}$.
    We determine this in two steps.
    First, we compute the weight of some exit graph in $G^{\min}_{\B}$.
    Second, we compute the minimum weight over all exit graphs in $\cup_{s \in \S} G^{\min}_{\B}(s \to \another{s})$.
    Comparing these quantities we determine whether $\another{s}$ is in the support of the exit distribution of $\B$.
    For the first step, collapse all states in $(\S \setminus \B)$ into a single state and compute a minimal directed spanning tree where the weight of an edge is given by the leading power in the Puiseux power expansion of the corresponding transition.
    A directed spanning tree in this collapsed graph corresponds to an exit graph in the Puiseux Markov chain, and their weights coincide.
    By~\cite{gabow_efficient_1986}, this computation takes polynomial time in $|\S|$.
    This determines the weight of some exit graph in $G^{\min}_{\B}$.
    For the second step, for all $s \in \S$, we proceed similarly, i.e., we collapse all states in $\{ s \} \cup (\S \setminus \B)$ into a single state and compute a minimal directed spanning tree. 
    Then, we add the smallest weight among the transitions from $s$ to $\another{s}$.
    The result corresponds to the weight of an exit graph in $G^{\min}_{\B}$ containing the edge $(s, \another{s})$.
    By~\cite{gabow_efficient_1986}, this computation takes at polynomial time in $|\S|$, and we repeat it at most $|\B| \le |\S|$ times.
    Taking the minimum over all computed weights while varying $s \in \B$, we deduce the minimum weight of all graphs in $\cup_{s \in \S} G^{\min}_{\B}(s \to \another{s})$. 
    Recall that this weight coincides with the one of an exit graph in $G^{\min}_{\B}$ if and only if $\another{s}$ is in the support of the exit distribution.
    Therefore, comparing the weights obtained in the first and second steps we decide whether $\another{s}$ is in the support or not.
    
    Given all communicating classes computed so far and the support of their exit distributions, we check whether we can add another communicating class. 
    By the characterization in \Cref{Result: Laminar communicating classes}, this corresponds to computing bottom strongly connected components in a directed graph of at most $(2 |\S| - 1)$ vertices, which takes linear time by~\cite{dijkstra_discipline_1976}, and is done at most $(|\S| - 1)$ times.
    
    Finally, given the full graph of communicating classes induced by the rank policy, we check whether the policy is a witness of limit-sure reachability.
    By the characterization in \Cref{Result: Characterization of limit-sure reachability}, we run a depth-first search starting at $\{s_0\}$ and check whether the only reachable absorbing communicating class from $s_0$ is $\{ \top \}$.
    We conclude that checking whether the rank policy is a witness of limit-sure reachability or not takes polynomial time.
\end{proof}

\subsection{Lower Bound}

An NP-hardness result was established for a similar problem in \cite[Lemma 1]{chatterjee2013AutomatedAnalysisRealtime}, namely, it was shown that the problem of determining whether a two-player game with partial-observation with reachability objective is limit-sure winning under memoryless policies is NP-hard. 
The reduction constructed a game that is a directed acyclic graph, and replacing the adversarial player with a uniform distribution over choices shows that the limit-sure winning problem under memoryless policies in POMDPs is also NP-hard.

\begin{proposition}
    \label{Result: NP-hardness}
    For all constants $m \ge 0$, the problem of determining whether a POMDP $P$ with reachability objective is limit-sure winning under memory $m$ policies is NP-hard.
\end{proposition}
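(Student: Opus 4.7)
The plan is to reduce from 3-SAT for each fixed memory bound, building on the memoryless argument already outlined in the preceding paragraph. Starting from the NP-hardness reduction of \cite[Lemma~1]{chatterjee2013AutomatedAnalysisRealtime}, a 3-SAT instance $\varphi$ is embedded in a two-player partial-observation reachability game on a DAG whose limit-sure value under memoryless protagonist policies equals $1$ iff $\varphi$ is satisfiable. Replacing the adversarial player by a uniform distribution over its available moves turns this game into a POMDP of polynomial size. Because the graph is a DAG, every play has bounded length, and because the randomised adversary plays each move with positive probability, for any memoryless protagonist policy the induced reachability probability equals $1$ iff every adversarial realisation reaches the target. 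In this bounded-horizon setting limit-sure and almost-sure reachability coincide, and both are equivalent to the protagonist winning against the worst-case adversary in the original game. This proves NP-hardness for the memoryless case.

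To lift the result to any constant $m$, the plan is to arrange the reduction so that memory provides no additional power on the produced instance. I would tag each state with its depth in the DAG and expose that depth in the observation, which inflates $\Z$ only by a polynomial factor since the DAG has polynomial depth. After this modification, every play visits each observation at most once, so the observable history at any decision point is determined by the current observation alone. Any $m$-memory policy can therefore be collapsed into an equivalent memoryless one: the memory update becomes a deterministic function of the observation, so marginalising over the reached memory state yields a distribution on actions per observation that induces the same probability measure on plays. Consequently, the POMDP is limit-sure winning under $\Sigma_m$ iff it is limit-sure winning under memoryless policies, and NP-hardness transfers to each constant $m$.

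The main obstacle is verifying the collapse step cleanly, namely that on the depth-tagged instance every $m$-memory policy admits a behaviourally equivalent memoryless policy with the same reachability probability; once the observations uniquely identify the depth along every play this is essentially immediate by induction on the depth, but care is needed to preserve the polynomial blow-up and to ensure the tagging does not accidentally help the protagonist. With this equivalence in hand, the memoryless NP-hardness established in the first paragraph lifts to every constant $m$, completing the proof.
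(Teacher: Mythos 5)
Your first paragraph is essentially the route the paper itself sketches before giving its explicit construction: take the NP-hardness reduction for partial-observation games of \cite[Lemma~1]{chatterjee2013AutomatedAnalysisRealtime}, replace the adversary by a uniform distribution, and use the fact that on a DAG the reachability probability is a polynomial in the memoryless policy parameters over a compact set, so limit-sure and almost-sure coincide. Up to there your argument is fine (the paper additionally spells out an explicit 3-SAT gadget, with one observation per Boolean variable, for completeness).

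The gap is in your lift to constant $m$. Depth-tagging guarantees that no observation is repeated along a play, but it does \emph{not} make the observable history a function of the current observation: different plays can reach the same observation $(z,\text{depth }k)$ through different observation--action prefixes, and an $m$-memory policy can store part of that prefix in its memory state. Given the current observation, the memory state is then correlated with the hidden state, so marginalising over the reached memory state does not produce a memoryless policy inducing the same measure on plays; your collapse claim is false in general. A minimal counterexample: from $s_0$ nature moves to one of two states with distinct observations $z_1, z_2$ at depth $1$, both of which move to states sharing a single observation $z$ at depth $2$, where action $a$ reaches $\top$ from the first branch and action $b$ from the second; every observation occurs at most once per play, a $2$-memory policy wins almost surely, and no memoryless policy does. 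This is not hypothetical in the 3-SAT-style gadgets: the sequence of observations seen along a clause partially reveals which clause the play is in, which is exactly the information extra memory can exploit, so the implication ``unsatisfiable $\Rightarrow$ not limit-sure under $\Sigma_m$'' does not follow from the memoryless case via your reduction. You would need to argue directly about memory-$m$ policies on the constructed instance (the paper's explicit reduction forces all occurrences of a variable to share one observation and analyses deterministic policies, exploiting acyclicity and a Kuhn-type argument), or redesign the gadget so that whatever a constant amount of memory could record is provably useless; as written, your second paragraph does not establish the proposition for any $m \ge 2$.
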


In the rest of the section, for completeness, we give an explicit reduction from 3-SAT~\cite{karp1972ReducibilityCombinatorialProblems} that prove \Cref{Result: NP-hardness}.

\paragraph{3-SAT.}
Consider boolean variables $x_1, x_2, \ldots, x_n$ and clauses $C_1, C_2, \ldots, C_m$ where each clause is the disjunction of three literals from the set $\{x_1, x_2, \ldots, x_n\} \cup \{\neg x_1, \neg x_2, \ldots, \neg x_n\}$.
The 3-SAT problem is determining if there is an assignment of the boolean variables that satisfy all clauses.

\begin{proof}[Proof sketch]
    The reduction is from 3-SAT, where each literal in each clause has a representative state.
    The main idea is to give one observation for all literals corresponding to the same boolean variable, so the controller must perform the same action in the states representing $x_i$ and $\neg x_i$. 
    The transitions for literals representing $x_i$ and $\neg x_i$ are different and represent their truth value by going to a target by different actions.
    Lastly, the starting state moves to every clause uniformly, so the limit-sure winning problem represents satisfying all clauses.
\end{proof}

\begin{proof}
    Consider an instance of 3-SAT given by boolean variables $x_1, x_2, \ldots, x_n$ and clauses $C_1, C_2, \ldots, C_m$.
    For $j \in [m]$, denote clause $C_j = \ell(j_1) \lor \ell(j_2) \lor \ell(j_3)$, where each literal $\ell \in \{x_1, x_2, \ldots, x_n\} \cup \{\neg x_1, \neg x_2, \ldots, \neg x_n\}$. 
    We construct the POMDP given as follows.
    \begin{itemize}
        \item 
        $\S \defas \sqcup_{j \in [m]} \{ \ell(j, 1), \ell(j, 2), \ell(j, 3) \} \cup \{ s_0, \top, \bot \}$, where $\ell(j, k)$ is a different state for each $j \in [m]$ and $k \in [3]$ that represents the $k$-th boolean variable of the $j$-th clause;
        \item 
        $\A \defas \{ t, f \}$ is the action set which represents assigning a truth value to a variable;
        \item 
        $\delta \colon \S \times \A \to \Delta(\S)$ is the transition function given by
        \[
        \delta(s, a) = \begin{cases}
            \frac{1}{m} \sum\limits_{j \in [m]} \1[\ell(j, 1)]
            & s = s_0 \\
            \1[\top]
            & a = t,  \exists j \in [m], k \in [3] : s = \ell(j, k) = x_i \\
            \1[\top]
            & a = f,  \exists j \in [m], k \in [3] : s = \ell(j, k) = \neg x_i \\				
            \1[\ell(j, k + 1)]
            & a = f,  \exists j \in [m], k \in [2] : s = \ell(j, k) = x_i \\
            \1[\ell(j, k + 1)]
            & a = t,  \exists j \in [m], k \in [2] : s = \ell(j, k) = \neg x_i \\
            \1[\bot]
            & a = f,  \exists j \in [m] : s = \ell(j, 3) = x_i \\
            \1[\bot]
            & a = t,  \exists j \in [m] : s = \ell(j, 3) = \neg x_i \\
            \1[s]
            & s \in \{ \top, \bot \}
        \end{cases}		
        \]
        In other words: $s_0$ moves to the first literal of each clause uniformly independent of the action; each literal moves to either $\top$ or the next literal in the clause; the terminal states $\top$ and $\bot$ are absorbing, i.e., for all actions $a \in \A$, we have that $\transition((s, a)) = \1[s]$, for $s \in \{\top, \bot\}$.
        \item 
        $\Z \defas \{ x_i : i \in [n] \} \cup \{ s_0, \top, \bot \}$ is the set of observations, one per boolean variable;
        \item 
        $o \colon \S \to \Z$ is the observation function that forces the controller to assign a consistent truth value to the literals and is given by
        \[
        o(s) = \begin{cases}
            s
            & s \in \{ s_0, \top, \bot\} \\
            x_i
            & \exists i \in [n], j \in [m], k \in [3] : s = \ell(j, k) \in \{ x_i, \neg x_i \}
        \end{cases}
        \]
        \item 
        $s_0 \in \S$ is the initial state.
    \end{itemize}
    See \Cref{Figure: 3-SAT reduction} for an illustration of this reduction.
    We show that this POMDP has reachability value $1$ if and only if the 3-SAT instance is satisfiable.
    
    \begin{figure}[t]
        \centering
        \begin{tikzpicture}[]
            \node (s0) at (0, 0) {$s_0$};
            \node (l11) at (-4, -2)  {$\ell(1, 1) = x_1$};
            \node (l12) at (-4, -4)  {$\ell(1, 2) = \neg x_2$};
            \node (l13) at (-4, -6) {$\ell(1, 3) = x_4$};
            \node (l21) at (4, -2) {$\ell(2, 1) = x_2$};
            \node (l22) at (4, -4) {$\ell(2, 2) = \neg x_3$};
            \node (l23) at (4, -6) {$\ell(2, 3) = \neg x_4$};
            \node (top) at (0, -4) {$\top$};
            \node (bot) at (0, -8) {$\bot$};
            
            \draw [->] (s0.south) -- (l11.north) node [midway, above] {$1 / m$};
            \draw [->] (s0.south) -- (l21.north) node [midway, above] {$1 / m$};
            \draw [->] (l11) -- (top) node [pos=0.1, below] {t}; 
            \draw [->] (l11.south) -- (l12.north) node [midway, left] {f};
            \draw [->] (l12) -- (top) node [pos=0.1, below] {f};
            \draw [->] (l12.south) -- (l13.north) node [pos=0.2, left] {t};
            \draw [->] (l13) -- (top) node [pos=0.2, below] {t};
            \draw [->] (l13) -- (bot) node [pos=0.1, below] {f};
            \draw [->] (l21) -- (top) node [pos=0.1, below] {t};
            \draw [->] (l21.south) -- (l22.north) node [midway, right] {f};
            \draw [->] (l22) -- (top) node [pos=0.1, below] {f};
            \draw [->] (l22.south) -- (l23.north) node [pos=0.2, right] {t};
            \draw [->] (l23) -- (top) node [pos=0.2, below] {f};
            \draw [->] (l23.south) -- (bot) node [pos=0.1, below] {t};
            
            \draw[dashed] (l12.east) -- (l21.west) node[midway, above] {$x_2$};
            \draw[dashed] (l13.east) -- (l23.west) node[midway, above] {$x_4$};
        \end{tikzpicture}
        \caption{Example of the reduction from 3-SAT to the limit-sure winning reachability problem in POMDPs for the 3-SAT instance $(x_1 \lor \neg x_2 \lor x_4 ) \land (x_2 \lor \neg x_3 \lor \neg x_4 )$.}
        \label{Figure: 3-SAT reduction}
    \end{figure}
    
    Assume the 3-SAT instance is satisfiable with a valuation $v \colon \{ x_i : i \in [n] \} \to \{ t, f \}$.
    Consider the memoryless deterministic policy for the controller given by any extension of the valuation, for example assigning action $t$ to states that do not represent a boolean variable. 
    In other words, consider $\sigma \colon \Z \to \A$ given by
    \[
    \sigma(s) = \begin{cases}
        v(x_i)
        & \exists i \in [n] : s = x_i \\
        t
        & s \in \{s_0, \top, \bot\}
    \end{cases}
    \] 
    We show that this policy guarantees a reachability probability of one, and therefore the POMDP has a reachability value of one.
    
    From the initial state $s_0$, any action leads uniformely to the set $\{ \ell(j, 1) : j \in [m] \}$.
    Therefore, it is enough to show that starting from $\ell(j, 1)$ we reach $\top$, for an arbitrary $j \in [m]$.
    Fix $j \in [m]$. 
    Since the clause $C_j = \ell(j_1) \lor \ell(j_2) \lor \ell(j_3)$ is satisfied by the valuation $v$, we consider three cases depending on the first literal that evaluates to true.
    \begin{enumerate}
        \item 
        Assume $v(\ell(j, 1)) = t$ and fix $i \in [n]$ such that $\ell(j, 1) \in \{ x_i, \neg x_i \}$. 
        Then, by the definition of $\sigma$ and $\delta$, we have that $\delta(\ell(j, 1), \sigma(x_i)) = \1[\top]$. 
        In other words, $\ell(j, 1)$ reaches $\top$ in a single transition.
        \item
        Assume $v(\ell(j, 1)) = f$ and $v(\ell(j, 2)) = t$ and fix $i_1, i_2 \in [n]$ such that $\ell(j, k) \in \{ x_{i_k}, \neg x_{i_k} \}$ for $k \in [2]$.
        Then, by the definition of $\sigma$ and $\delta$, we have that $\delta(\ell(j, 1), \sigma(x_{i_1})) = \1[\ell(j, 2)]$ and $\delta(\ell(j, 2), \sigma(x_{i_2})) = \1[\top]$. 
        In other words, $\ell(j, 1)$ reaches $\top$ after two transitions.
        \item
        Assume $v(\ell(j, 1)) = f$, $v(\ell(j, 2)) = f$, and $v(\ell(j, 3)) = t$ and fix $i_1, i_2, i_3 \in [n]$ such that $\ell(j, k) \in \{ x_{i_k}, \neg x_{i_k} \}$ for $k \in [3]$.
        Then, by the definition of $\sigma$ and $\delta$, we have that $\delta(\ell(j, 1), \sigma(x_{i_1})) = \1[\ell(j, 2)]$, $\delta(\ell(j, 2), \sigma(x_{i_2})) = \1[\ell(j, 3)]$, and $\delta(\ell(j, 3), \sigma(x_{i_3})) = \1[\top]$. 
        In other words, $\ell(j, 1)$ reaches $\top$ after three transitions.
    \end{enumerate}
    Since in all cases, starting from $\ell(j, 1)$ we reach $\top$, we have proven that $\sigma$ guarantees a reachability value of one.
    
    Assume the 3-SAT instance is not satisfiable, i.e., for all valuations $v \colon \{ x_i : i \in [n] \} \to \{ t, f \}$ there exists at least one clause that evaluates to false.
    We show that every deterministic memoryless policy leads to a reachability value strictly less than one and therefore the POMDP does not have a reachability value of one.
    
    Note that the reachability value of our POMDP may consider only deterministic policies.
    This observation holds POMDPs with general policies~\cite{feinberg1996MeasurabilityRepresentationStrategic, venel2016StrongUniformValue}, and we argue that this holds for our POMDP even when considering only memoryless policies because there are no loops in the dynamic.
    Indeed, our POMDP can be seen as an extended-form game with one player.
    Moreover, the controller may remember all of their previous actions of the game since, during the process, no observation is presented twice before reaching the states $\top$ and $\bot$, where the outcome is determined. 
    Therefore, Kuhn's theorem~\cite[Section 5]{aumann1964MixedBehaviorStrategies} applies and every memoryless policy induces the same distribution over outcomes that some distribution over deterministic policies.
    In other words, the reachability value of our POMDP may consider only deterministic policies.
    
    Consider a deterministic memoryless policy $\sigma \colon \Z \to \A$. 
    Define the valuation $v \colon \{ x_i : i \in [n] \} \to \{ t, f \}$ given by $v(x_i) = \sigma(x_i)$.
    Since the 3-SAT instance is not satisfiable, there exists a clause $C_j$ with $j \in [m]$ such that $v(C_j) = f$.
    Therefore, under $\sigma$, starting from $\ell(j, 1)$, the dynamic does not reach $\top$ by a similar argument as before.
    We conclude that the reachability probability starting from $s_0$ and following $\sigma$ is at most $1 - 1/m$, and therefore the POMDP does not have value one.
    This concludes the proof of NP-hardness.	
\end{proof}

We finish this section with the proof of \Cref{Result: NP-completeness}.

\begin{proof}[Proof of \Cref{Result: NP-completeness}]
    \Cref{Result: NP-hardness} establish the NP-hardness. 
    \Cref{Result: rank witness existence} and \Cref{Result: Verifier existence} imply the existence of a rank policy of polynomial size and a polynomial-time verifier, i.e., they imply the NP upper bound of the problem.
\end{proof}

\subsection{Extensions}

In this section, we discuss several extensions of \Cref{Result: NP-completeness}.
The following result shows that \Cref{Result: NP-completeness} extends to constant memory policies. 

\begin{corollary}
    \label{Result: constant memory NP-complete}
    The problem of determining whether a POMDP $P = (\S, \A, \delta, \Z, o, s_0)$ with reachability objective is limit-sure winning under constant memory policies is NP-complete.
\end{corollary}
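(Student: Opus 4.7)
NP-hardness is immediate from \Cref{Result: NP-hardness}, which already covers every constant $m \ge 0$, so I would focus on showing NP membership.

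My plan is to reduce the constant-memory problem on $P$ to the memoryless problem on a polynomial-size product POMDP. For any deterministic memory update function $\sigma_u \colon \M \times \Z \times \A \to \M$, I would construct $P^{\sigma_u}$ with state space $\S \times \M$, action set $\A$, initial state $(s_0, m_0)$, observation $(o(s), m)$ at state $(s, m)$, and transitions
\[
\delta^{\sigma_u}((s, m), a)(s', m') \defas \delta(s, a)(s') \cdot \1[m' = \sigma_u(m, o(s), a)]\,.
\]
By construction, a memoryless policy on $P^{\sigma_u}$ is a function $\M \times \Z \to \Delta(\A)$ and so coincides with an action-selection function $\sigma_a$ of a memory-$m$ policy $(\sigma_a, \sigma_u, \M, m_0)$ on $P$ that uses the fixed update $\sigma_u$. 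A coupling of the induced Markov chains then gives that $P$ is limit-sure winning under $\Sigma_m$ if and only if there exists some $\sigma_u$ for which $P^{\sigma_u}$ is limit-sure winning under memoryless policies.

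The NP certificate would be the pair $(\sigma_u, \tau)$, where $\tau$ is a rank policy on $P^{\sigma_u}$. For constant memory, the description of $\sigma_u$ uses $|\M| \cdot |\Z| \cdot |\A| \cdot \lceil \log_2 |\M| \rceil$ bits, which is polynomial in the input, and $P^{\sigma_u}$ has $|\S| \cdot |\M|$ states and $|\A|$ actions, which is also polynomial in the input. By \Cref{Result: rank witness existence} applied to $P^{\sigma_u}$, a polynomial-size rank witness $\tau$ exists whenever $P^{\sigma_u}$ is memoryless limit-sure winning, and by \Cref{Result: Verifier existence} such a $\tau$ is verifiable in polynomial time. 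The only point requiring care is the equivalence between memory-$m$ policies on $P$ and memoryless policies on $P^{\sigma_u}$, which follows from having encoded the deterministic memory update directly into the transitions of $P^{\sigma_u}$; beyond this bookkeeping I do not anticipate substantive obstacles, since the heavy lifting was already done in \Cref{Result: NP-completeness}.
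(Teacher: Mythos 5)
Your proposal is correct and follows essentially the same route as the paper: guess a deterministic update function $\sigma_u$, form the product POMDP on $\S \times \M$ with observations $(o(s), m)$ and transitions gated by $\sigma_u$, and reduce to the memoryless case certified by a polynomial-size rank witness and the polynomial-time verifier. The only cosmetic point is that \Cref{Result: rank witness existence} and \Cref{Result: Verifier existence} are stated for blind MDPs, so formally you invoke \Cref{Result: NP-completeness} (or first compose with \Cref{Result: Reduction to blind MDPs}) on the product POMDP, exactly as the paper does.
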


\begin{proof}[Proof sketch]
    The NP-hardness follows from Proposition~1. 
    The NP upper bound is obtained as follows. 
    For an amount of memory $m \ge 1$, guessing the update function $\policy_u$, we can solve the memoryless problem on the product of the POMDP and the memory elements. 
    Hence, the NP upper bound follows.
\end{proof}

\begin{proof}
    The NP-hardness follows from \Cref{Result: NP-hardness}. 
    The NP upper bound is obtained as follows. 
    Consider an amount of memory $m \ge 1$.
    Note that an update function $\policy_u \colon [m] \times \Z \times \A \to [m]$ is a finite object of polynomial size.
    Moreover, fixing an update function $\policy_u$ and considering the product POMDP with states $\S \times [m]$, memoryless policies in the product correspond to policies with memory $m$ in the original POMDP.
    The formal definition of the product POMDP is $P_m \defas \left ( \S \times [m], \A, \another{\delta}, \Z \times [m], \another{o}, (s_0, 1) \right )$ where
    \begin{itemize}
        \item 
        the observation function $\another{o}$ is defined as, for all $s \in \S$ and $\mu \in [m]$, 
        \[
        \another{o} \left ( (s, \mu) \right ) \defas (o(s), \mu) \,,
        \]
        \item 
        the transition function $\another{\transition}$ is given by
        \[
        \another{\delta}((s, \mu), a)((\another{s}, \another{\mu})) 
        \defas \begin{cases}
            \delta(s, a)(\another{s}) 
            & \policy_u(\mu, o(s), a) = \another{\mu} \\
            0
            & \sim
        \end{cases}
        \]
    \end{itemize}
    Then, $P_m$ is limit-sure winning under memoryless policies if and only if $P$ is limit-sure winning under memory policies using memory amount $m$, which proves the NP upper bound. 
\end{proof}

\begin{remark}
    While \Cref{Result: constant memory NP-complete}  is stated for constant memory, the result holds for all memory bounds that are polynomial in the size of the POMDP, as this ensures that the witness is of polynomial size.
\end{remark}

While \Cref{Result: constant memory NP-complete} presents the extension to small memory policies, we further extend \Cref{Result: constant memory NP-complete} to other classic objectives, namely, parity or omega-regular objectives.
Parity objectives are canonical forms to express all $\omega$-regular properties~\cite{thomas1997LanguagesAutomataLogic}, e.g., all properties expressed in the linear-temporal logic (LTL) can be expressed as deterministic parity automata. 
In a parity condition, every state is labeled with a positive integer priority and the objective requires that the minimum priority visited infinitely often is even.
For any fixed memory policy, we obtain a Markov chain, and the recurrent classes are reached with probability~1. A recurrent class satisfies the parity objective with probability~$1$ if the minimum priority is even, which we refer to as a good recurrent class, otherwise satisfies the objective with probability~0. 
Hence, the limit-sure problem for parity under memoryless strategies reduces to limit-surely reaching the good recurrent classes.
Hence, the NP-completeness result of \Cref{Result: NP-completeness} and \Cref{Result: constant memory NP-complete} also extend to parity objectives. 
However, we focus on reachability objectives as all conceptual aspects are clarified in this basic and most fundamental objective.

\begin{corollary}
    \label{Result: constant memory parity NP-complete}
    The problem of determining whether a POMDP $P$ with parity objective is limit-sure winning under constant memory policies is NP-complete.
\end{corollary}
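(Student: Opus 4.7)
The plan is to reduce the parity problem to the reachability problem of \Cref{Result: constant memory NP-complete} and to reuse the rank-policy machinery of \Cref{Section: Upper bound} to obtain NP membership.

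For NP-hardness, I would reduce from \Cref{Result: constant memory NP-complete}. Given a POMDP $P$ with reachability objective to target $\top$, assign priority $2$ to $\top$ (adding a self-loop if needed so that $\top$ is absorbing) and priority $1$ to every other state. A play satisfies the parity objective iff it reaches $\top$: if it does, the minimum priority seen infinitely often is $2$ (even), and if it does not, the minimum is $1$ (odd). Hence the limit-sure values for reachability and parity coincide, and the hardness transfers.

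For NP membership, I would follow the pattern of \Cref{Result: constant memory NP-complete}: guess a polynomial-size memory update function $\sigma_u$, form the product POMDP $P_m$ with state space $\S \times [m]$, and reduce to the memoryless limit-sure parity problem on $P_m$. It then suffices to place this memoryless version in NP, which I do by further reducing it to a memoryless limit-sure reachability problem on $P_m$. Concretely, for a memoryless rank policy $\sigma$ on $P_m$, the support of the induced Puiseux transition function does not depend on $\eps$, so the bottom strongly connected components $\mathcal{R}_1, \ldots, \mathcal{R}_k$ of the support graph form a fixed set. These are precisely the absorbing Puiseux communicating classes of the induced chain, since a bottom SCC of the support graph trivially has empty exit distribution. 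Within each $\mathcal{R}_i$ every state is visited infinitely often in the limit $\eps \to 0^+$, so the minimum priority seen infinitely often equals the minimum priority $p(\mathcal{R}_i)$ of a state in $\mathcal{R}_i$. Declaring $\mathcal{R}_i$ \emph{good} if $p(\mathcal{R}_i)$ is even and setting $T \defas \bigsqcup_{\mathcal{R}_i \text{ good}} \mathcal{R}_i$, the limit-sure parity value of $\sigma$ equals its limit-sure reachability value to $T$.

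With this reduction in hand, \Cref{Result: rank witness existence} yields a polynomial-size rank-policy witness, and the verifier of \Cref{Result: Verifier existence} extends with one small addition: after constructing the graph of communicating classes in polynomial time, identify the absorbing ones and classify each as good or bad by scanning its state priorities to compute $p(\mathcal{R}_i)$; then run a depth-first search from $\{s_0\}$ to verify that every reachable absorbing communicating class is good. Each step runs in polynomial time, so the problem is in NP. The main technical point I expect to require care is the identification, under a rank policy, of the absorbing Puiseux communicating classes with the bottom SCCs of the support graph. Once this is established, the set of states visited infinitely often in the limit coincides with the whole class $\mathcal{R}_i$, which justifies that $p(\mathcal{R}_i)$ is indeed the minimum priority seen infinitely often and makes the reduction to reachability go through.
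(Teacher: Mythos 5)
Your proposal is correct and follows essentially the same route as the paper: reduce constant memory to memoryless via the product POMDP, observe that the recurrent classes are determined by the support of the guessed policy, and reduce the parity objective to limit-sure reachability of the good recurrent classes, verified with the machinery of \Cref{Result: rank witness existence} and \Cref{Result: Verifier existence}. Your explicit identification of the bottom SCCs of the support graph with the absorbing communicating classes, and your reachability-as-parity hardness step, are just spelled-out versions of what the paper's proof does implicitly.
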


\begin{proof}[Proof sketch]
    The NP-hardness follows from Proposition~1. 
    The NP upper bound is obtained as follows. By guessing the support of rank policies, we can compute the recurrent classes, and the objective is to reach the good recurrent classes.  
    Hence, the NP upper bound follows.
\end{proof}

\begin{proof}
    The NP-hardness follows from \Cref{Result: NP-hardness}. 
    The NP upper bound is obtained as follows.
    First, consider memoryless policies because the proof for constant memory policies follows from the reduction described in the proof of \Cref{Result: constant memory NP-complete}. 
    
    Consider a POMDP that is limit-sure winning for the parity objective under memoryless policies.
    Then, by definition of the limit-sure winning property, for all $\eps > 0$, there exists a policy $\policy_\eps$ such that the probability of satisfying the parity condition is at least $1 - \eps$ under this policy. 
    Consider a sequence $(\eps_n \defas 1/n)_{n \ge 1}$ and a corresponding sequence of policies $(\policy_n)_{n \ge 1} \subseteq \Delta(\A)^\Z$. 
    Since the set of all possible supports per observation is finite, up to taking a subsequence, we assume that, for all $z \in \Z$, the support of $\policy_n(z)$ is invariant on $n$.
    For a Markov chain, the recurrent classes depend only on the support of the transition function.
    Therefore, the recurrent classes of the Markov chains induced by $\policy_n$ do not depend on $n$.
    
    Note that, in a Markov chain, the parity condition is satisfied if and only if a good recurrent class is reached.
    Therefore, the probability of satisfying the parity objective in the POMDP under $\sigma_n$ corresponds to the reachability probability to good recurrent classes (under $\sigma_n$).
    By guessing the support of a sequence of policies that are witness of the limit-sure property for the POMDP, we reduce the parity objective to the reachability objective of the corresponding good recurrent classes.
\end{proof}

\begin{remark}
    As mentioned before, while \Cref{Result: constant memory parity NP-complete}  is stated for constant memory, the result holds for all memory bounds that are polynomial in the size of the POMDP, as this ensures that the witness is of polynomial size.
\end{remark}

The following result shows that \Cref{Result: NP-completeness} extends to parametric Markov chains (pMCs) as presented by~\cite{junges2018FiniteStateControllersPOMDPs}. 

\begin{corollary}
	\label{Result: limit-sure pMCs NP-complete}
	The problem of determining whether a parametric Markov chain with reachability objective is limit-sure winning under constant memory policies is NP-complete.
\end{corollary}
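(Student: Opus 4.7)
The plan is to lift the framework developed for POMDPs in \Cref{Section: Upper bound} to parametric Markov chains, since a pMC is essentially a Markov chain whose transition function is a rational function of parameters, and the question of limit-sure reachability asks whether the supremum of the reachability probability over parameter valuations equals one. First, I would dispatch NP-hardness: by the reduction from small-memory policies in POMDPs to pMCs established by~\cite{junges2018FiniteStateControllersPOMDPs}, where parameters play the role of action and memory-update probabilities, the NP-hardness of \Cref{Result: NP-hardness} transfers immediately to pMCs. A POMDP instance under $\Sigma_m$ can be translated to a pMC whose parameter valuations correspond exactly to the choices of $(\policy_a, \policy_u)$, and limit-sure winning is preserved across the reduction.

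For the NP upper bound, I would replay the pipeline of \Cref{Section: Upper bound}, now viewing the pMC as a family of Markov chains indexed by the parameter vector $p \in [0,1]^k$. The analog of \Cref{Result: function witness} is a characterization in the first-order theory of the reals: the pMC is limit-sure winning if and only if for every $\lambda < 1$ there is a valuation $p$ together with a value vector $v$ satisfying the minimal-fixpoint equations of \Cref{Result: value characterization} with $v_{s_0} \ge \lambda$. By the Tarski--Seidenberg principle (\Cref{Result: Tarki's principle}) combined with \Cref{Result: Puiseux functions}, the existence of such $(p,v)$ for every $\lambda$ is equivalent to the existence of a Puiseux-function valuation $p \colon [0,\eps_0) \to [0,1]^k$ whose induced Puiseux Markov chain is limit-sure winning. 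From this point onward, the graph-of-communicating-classes characterization (\Cref{Result: Characterization of limit-sure reachability}) applies verbatim, because it is a statement about arbitrary Puiseux Markov chains and does not use the POMDP structure.

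With the Puiseux witness in hand, I would then repeat the argument of \Cref{Result: rank witness existence} to show that one may replace any Puiseux valuation by a rank valuation, that is, one in which each parameter is expressed as $\eps^{i(p)}/\text{(normalizer)}$ with integer ranks $i(p)$, and that the ranks can be chosen of polynomial size by the same Cramer-rule argument applied to the homogeneous system of (strict) inequalities coming from comparing the weights of exit graphs. The polynomial-time verifier from \Cref{Result: Verifier existence} then decides, given the rank assignment, whether the induced graph of communicating classes witnesses limit-sure reachability, again by iteratively computing exit-distribution supports via minimum-weight exit graphs and checking that the only absorbing class reachable from $s_0$ is $\{\top\}$.

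The main obstacle I anticipate is the form of pMC transitions: unlike the POMDP-under-memoryless-policy case, where each transition probability is a multilinear monomial in the parameters, pMC transitions as defined by~\cite{junges2018FiniteStateControllersPOMDPs} can be general rational functions subject to the stochasticity constraint. To handle this cleanly, I would reduce to a normal form in which every transition is a polynomial (by clearing denominators into auxiliary states, as is standard) so that the exit-graph weight $w(g) = \sum_{(s,\another{s}) \in g} i(s \to \another{s})$ retains its clean additive form, making the system of rank inequalities linear and allowing the Cramer-rule size bound to go through. Once this normalization is in place, every step above is mechanical and the NP upper bound follows, completing the NP-completeness proof.
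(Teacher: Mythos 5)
Your hardness direction matches the paper, but your upper-bound route diverges from it: the paper does not redo the Puiseux/rank pipeline on pMCs at all. Its proof is a one-line transfer: by \cite[Corollary 1]{junges2018FiniteStateControllersPOMDPs} the small-memory POMDP problem and the pMC problem for reach-avoid objectives are (polynomially) equivalent, so the NP upper bound already established in \Cref{Result: NP-completeness} and \Cref{Result: constant memory NP-complete} translates immediately, and NP-hardness transfers through the same correspondence. You instead re-derive the first-order/Tarski/Puiseux/communicating-class machinery directly on pMCs, and that is where a genuine gap appears.

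The gap is in the rank-witness step. In the POMDP case the induced transition is a \emph{nonnegative} combination $\sum_{a} \policy(\eps)(a)\,\delta(s,a)(\another{s})$ of monomials $\eps^{i(a)}$, so the leading exponent of each transition is just the minimum rank over supporting actions, the weight of an exit graph is additive in the ranks, and the resulting system of inequalities is linear, which is what lets the Cramer's-rule argument bound the witness size. For a pMC the transitions are general rational (or, after clearing denominators, polynomial) functions of the parameters with possibly \emph{negative} coefficients, e.g.\ $1-p$ or $p - pq$; substituting Puiseux functions for the parameters can cancel leading terms, so the leading exponent of a transition is no longer determined by the parameter ranks alone, exit-graph weights are no longer linear forms in the rank variables, and the polynomial-size bound does not go through as stated. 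Your proposed normalization (clearing denominators into auxiliary states) removes denominators but not subtractions, so it does not resolve this. Relatedly, the witness format $\eps^{i(p)}/\text{normalizer}$ is not adequate for pMCs: parameters are not probabilities over a common action set, there is no canonical normalizer, and a limit-sure witness may need a parameter tending to $1$ (as the wait-probability does in \Cref{Example: Almost-Sure is not Limit-Sure}), which a pure power $\eps^{i(p)}$ cannot express; at minimum one needs valuations with leading behavior $1-\eps^{i}$, or an argument phrased in terms of leading exponents of the induced transitions rather than of the raw parameters (or a restriction to simple pMCs in the sense of \cite{junges2018FiniteStateControllersPOMDPs}). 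These are precisely the complications the paper avoids by routing the upper bound through the POMDP equivalence rather than reproving it on pMCs.
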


\begin{proof}[Proof sketch]
	By~\cite[Corollary 1]{junges2018FiniteStateControllersPOMDPs}, the quantitative problem for POMDPs and pMCs for reach-avoid objectives are equivalent.
	In particular, the NP upper bound for limit-sure winning for reachability objectives translates immediately.
\end{proof}

\Cref{Result: limit-sure pMCs NP-complete} complements the qualitative objectives investigated by~\cite{junges2021ComplexityReachabilityParametric}, which include almost-sure but not limit-sure reachability.

\section{Conclusion and Future Work}

In this work, we presented the first solution for limit-sure winning with small memory policies for POMDPs. 
While the present work establishes the theoretical foundations, interesting directions of future work include the development of efficient encodings in NP-complete problems that have well developed solvers.
This includes SAT and Mixed Linear Program (MLP). 

Along evaluating combinations of reductions and solvers in standard benchmark instances, an important task is to identify classes of POMDPs on which a solver works particularly well, possibly including efficient heuristics for scalability and practical applications.
Besides classic reductions, incremental encodings should be investigated.
In other words, generating the clauses for SAT and the restrictions for MLP incrementally as opposed to generating all of them at the same time. 
Incremental encodings have been developed for almost-sure reachability, see for example~\cite{chatterjee2016SymbolicSATBasedAlgorithm}, and they take advantage of incremental solvers obtaining meaningful practical improvements.	

\paragraph{Acknowledgments.} 
This research was partially supported by Austrian Science Fund (FWF) 10.55776/COE12, the support of the French Agence Nationale de la Recherche (ANR) under reference ANR-21-CE40-0020 (CONVERGENCE project), and the ERC CoG 863818 (ForM-SMArt) grant.

\bibliographystyle{alpha}
\bibliography{refs}
	
\end{document}